\newcommand{\cb}[1]{}
\newcommand{\dN}{\mathbb{N}}
\newcommand{\dE}{\mathbb{E}}
\newcommand{\wh}{\widehat}
\newcommand{\ind}{\mbox{1}\kern-.25em \mbox{I}}
\def\build#1_#2^#3{\mathrel{\mathop{\kern 0pt#1}\limits_{#2}^{#3}}}
\def\videbox{\mathbin{\vbox{\hrule\hbox{\vrule height1ex \kern.5em
\vrule height1ex}\hrule}}}
\numberwithin{equation}{section}
\newtheorem{thm}{Theorem}[section]
\newtheorem{rem}{Remark}[section]
\newtheorem{lem}{Lemma}[section]
\def\Var{\mathbb{V}}
\def\norm#1{\left\| #1 \right\|}
\def\abs#1{\left| #1 \right|}
\def\acc#1{\left\{ #1 \right\}}
\def\pa#1{\left( #1 \right)}
\def\pab#1{\bigl( #1 \bigr)}
\def\cro#1{\left[ #1 \right]}
\def\dt{\mbox{\rm d}t}
\def\ind{\mathbb{1}}
\def\norm#1{\left\| #1 \right\|}
\def\abs#1{\left| #1 \right|}
\def\acc#1{\left\{ #1 \right\}}
\def\pa#1{\left( #1 \right)}
\def\pab#1{\bigl( #1 \bigr)}
\def\cro#1{\left[ #1 \right]}
\def\dt{\mbox{\rm d}t}
\def\build#1_#2^#3{\mathrel{
\mathop{\kern 0pt#1}\limits_{#2}^{#3}}}
\def\tend_#1^#2{\mathrel{
\mathop{\kern 0pt\longrightarrow}\limits_{#1}^{#2}}}
\def\bkE{\mathbb{E}}
\def\det#1{\mid \! #1 \! \mid}
\newcommand{\dMah}[3]{\sqrt{\pa{#1 - #2}^T #3^{-1} \pa{#1 - #2}}}
\def\SigmaEt{\Sigma_\ast}
\def\bkE{\mathbb{E}}
\def\Xbarn{\overline{X}_n}
\def\Sigmachap{\widehat\Sigma_n}
\def\SigmaS{\Sigma_\ast}
\def\tendps{\build{\longrightarrow}_{n \rightarrow \infty}^{a.s.}}
\gdef\beginguillemets{\leavevmode\raise0.3ex%
         \hbox{{$\scriptscriptstyle\langle\!\langle\,$}\nobreak\ignorespaces}} 
\gdef\beginguillemets{\leavevmode\hbox{\ly(\kern-0.20em(\kern+0.20em}\nobreak} 
\gdef\endguillemets{\ifdim\lastskip>\z@\unskip\penalty\@M\fi 
         \leavevmode\raise0.3ex%
         \hbox{{$\scriptscriptstyle\,\rangle\!\rangle$}}} 
\gdef\endguillemets{\nobreak\leavevmode\hbox{\kern+0.20em\ly)\kern-0.20em)}} 
\title{A mixture of ellipsoidal densities for 3D data modelling}
\author{Denis Brazey\footnote{Société Prynel, RD974, 21190 Corpeau, France},  Antoine Godichon-Baggioni\footnote{Laboratoire de Probabilités, Statistique et Modélisation, Sorbonne Université, 75005 Paris, France} and Bruno Portier \footnote{Laboratoire de Mathématiques de l'INSA, INSA ROuen-Normandie, 76800 Saint Etienne du Rouvray}
}
\date{}
\begin{document}

\maketitle

\begin{abstract}
In this paper, we propose a new ellipsoidal mixture model. This model is based a new probability density function belonging to the family of elliptical distributions and designed to model points spread around an ellipsoidal surface. 
Then, we consider a mixture model based on this density, whose parameters are estimated with the help of an EM algorithm. 
The properties of the estimates are studied theoretically and empirically. 
The algorithm is compared to a state of the art ellipse fitting method and experimented on 3D data. 
\end{abstract}

\noindent\textbf{Keywords: }
Ellipsoidal mixture model; Elliptical distribution; EM algorithm.

\section{Introduction}\label{section_intro}
Fitting geometric primitives to a set of noisy points holds significant practical relevance across a multitude of scientific domains, including environmental science \cite{Burt19}, agriculture \cite{Parr22}, computer-aided design (CAD) \cite{Romanengo22}, industrial applications \cite{Nahangi19}, robotics \cite{Shi21}, autonomous vehicles \cite{Du18}, and LiDAR applications \cite{Shi21b}.

In particular, geometric modeling offers a powerful means of representing objects present in images captured by imaging sensors. These sensors provide 2D or 3D point sets that describe the external surfaces of objects. Describing these data using simple geometric shapes such as spheres, cylinders, planes, and ellipsoids enables us to summarize and extract meaningful information. This high-level representation is crucial for efficient analysis and understanding of image content.

In practice, two fundamental scenarios arise: shape detection and shape fitting. In classical fitting problems, it is assumed that the majority of points belong to the shape of interest. Consequently, the entire data set is modeled with a single instance of the shape, with some points considered as outliers. The accuracy of this modeling is inevitably impacted by the noise in the data points \cite{Li20}, as well as their susceptibility to outliers and high levels of missing data. Typically, these algorithms are grounded in the principles of least squares, where the minimized distance is usually algebraic or geometric \cite{Li04, Kesaniemi17, Ahn02}.

Conversely, in shape detection problems, the data may contain a high proportion of outliers that belong to other objects. The objective here is to extract one or multiple instances of a given shape without prior knowledge of which observations belong to them. Techniques like RANSAC and Hough transform are commonly employed to address this challenge. RANSAC randomly selects sets of inlier points, fits the primitive shape, and retains those closest to the data \cite{Sun2019}. The Hough transform, on the other hand, employs a voting scheme in a discretized parameter space \cite{Sommer20}. However, these methods often exhibit sensitivity to parameter tuning. An alternative approach is to first apply a segmentation algorithm, followed by a fitting algorithm on the detected clusters. Recently, deep neural networks have also been employed for primitive detection, segmenting point clouds into clusters and predicting classification and membership scores for subsequent fitting steps \cite{Qi17, Li19, Sharma20}.

%

Mixture models \cite{McLachlan2000} are statistical tools that can efficiently classify data and estimate a probabilistic model for each component.  
They find applications in various image processing tasks, including background subtraction \cite{Zuo19}, image segmentation \cite{Yang20}, head detection \cite{Brazey14}, and ellipsoid fitting \cite{Zhao21}.

In this work, 
our focus lies in modeling a 3D point cloud using a mixture of ellipsoidal shapes. The primary challenge is selecting the component density within the mixture. We propose employing a parametric density of the form:

$$f_{\theta}(x) = C \, \exp\pa{- \dfrac{d_{\theta}^2(x)}{2 \sigma^2}}, \quad x \in \mathbb{R}^d$$  

\noindent
where $C$ represents the normalization constant, $d_{\theta}(x)$ quantifies the signed distance between point $x \in \mathbb{R}^d$ and the surface $S$ defined by the parameter vector $\theta$, and $\sigma$ controls the noise level. This distribution characterizes points distributed around the surface $S$ of the considered shape with fluctuations in the normal direction. The surface is defined by the zero level line of the distance function $d_{\theta}$, where the density $f$ reaches its maximum value.

To fit an ellipsoid with a center $\mu \in \mathbb{R}^d$ and shape matrix $\Sigma \in \mathbb{R}^{d \times d}$, we propose using the following signed distance:
\[
d_\theta ( x) = \sqrt{(x - \mu)^T \, \Sigma^{-1} \, (x - \mu)} - 1, 
\]
\noindent resulting in a new pdf and leading to a new mixture model based on this density. 
Let us note that for the choice $d_{\mu}(x) = \norm{x - \mu}$, where $\norm{\cdot}$ denotes the Euclidian norm in $\mathbb{R}^d$, then $S = \left\lbrace \mu \right\rbrace$ and the density $f$ reduces to the usual Gaussian density of parameters $\mu$ and $\sigma$. 
When considering $d_{(\mu, r)}(x) = \norm{x - \mu} - r$, then $S$ is the sphere of center $\mu \in \mathbb{R}^d$ and radius $r > 0$ and we obtain the density  introduced in \cite{Brazey14}.

The primary objectives of this study are twofold: first, to develop efficient methods for parameter estimation of this new density, providing direct estimates and their convergence rates; second, to propose iterative estimation methods with improved behavior, particularly relevant for estimating the parameters of the mixture model. Finally, we estimate the parameters of this mixture model using the EM algorithm \cite{Dempster77}, where the maximization step is obtained thanks to the aforementioned iterative estimates. Finally, this approach seems to be confirmed by  experiments on real and  simulated data.   

The structure of the paper is as follows: Section \ref{section_generalisation} introduces the ellipsoidal mixture model, while Section \ref{section_estimation} delves into the estimation of model parameters. Section \ref{section_experiments_simul} presents results obtained from simulated data, and Section \ref{section_applications} discusses experiments conducted on real data. The proofs and additional results are deferred to the Appendix.

\section{The ellipsoidal mixture model}\label{section_generalisation}

In this section, we introduce the new ellipsoidal density and establish some usefull properties and results which will be used in the rest of the paper. 

\subsection{The new ellipsoidal density}\label{ssection_density}

In order to model points spread around an ellipsoidal surface, we propose to use a parametric density of the form 

\begin{equation}\label{eq_gen_density}
f_{\theta}(x) = C \, \exp\pa{- \dfrac{d_{\theta}^2(x)}{2 \sigma^2}}, x \in \mathbb{R}^d
\end{equation} 

\noindent 
where $C$ is the normalization constant, $d_{\theta}(x)$ quantify the signed distance between the point $x \in \mathbb{R}^d$ 
and the surface $S$ of parameter vector $\theta$ 
and $\sigma$ controls the level of noise.

Let us discuss the choice of the distance function. 
A $d$-dimensional ellipsoid is characterized by a center $\mu \in \mathbb{R}^d$ and a shape matrix $\Sigma \in \mathbb{R}^{d \times d}$ symetric and positive definite. 
The ellipsoidal surface $S$ is defined as follows 
\begin{equation} \label{def_ellipsoid}
S = \left\lbrace x \in \mathbb{R}^d, \pa{x - \mu}^T \, \Sigma^{-1} \, \pa{x - \mu} = 1\right\rbrace. 
\end{equation}

The center $\mu$, defining the location of the ellipsoid, is the intersection between the $d$ axes of the ellipsoid. 
The shape matrix $\Sigma$ defines the shape and the orientation of the surface. 
More precisely, we can consider the decomposition  $\Sigma = P \, D \, P^{-1}$, where $D$ is the diagonal matrix of eigenvalues $\pa{\lambda_1, \ldots, \lambda_d}$ of $\Sigma$ such that $0 < \lambda_1 = \lambda_{min}(\Sigma) \leq \ldots \leq \lambda_d = \lambda_{max}(\Sigma)$ and $P$ is the orthogonal matrix whose colums are the associated eigenvectors $\pa{v_1, \ldots, v_d}$. 
Principal axes of the ellipsoid are given by $\pa{v_i}_{i=1}^d$ and their lengths by $\pa{2 \, \sqrt{\lambda_i}}_{i=1}^d$.  Since the Euclidian distance is difficult to write analytically in the case of the ellipsoid, we decide to use the Mahalanobis distance defined as 
\begin{equation}
d_m\pa{x, \mu, \Sigma} = \sqrt{(x - \mu)^T \, \Sigma^{-1} \, (x - \mu)}, 
\end{equation}

\noindent
where $x \in \mathbb{R}^d$ is the considered point and $\theta = \pa{\mu, \Sigma}$ is the parameter of the ellipsoid. 
From \eqref{def_ellipsoid}, the ellipsoid is defined by the contour line of value one.  
Note that in the particular case where $\Sigma = I_{d}$, the Mahalanobis distance coincides with the Euclidian distance.
Finally, the general density \eqref{eq_gen_density} rewrites 
\begin{equation} \label{def_densite}
f(x) = C_d \, \exp\pa{- \dfrac{1}{2 \, \sigma^2} \pa{\dMah{x}{\mu}{\Sigma} - 1}^2}, 
\end{equation}

\noindent 
where $\mu \in \mathbb{R}^d$, $\Sigma \in \mathbb{R}^{d \times d}$ is symetric and positive definite and $\sigma^2 > 0$. 
The parameter $\sigma$ controlls the dispersion of the observations around the surface. 
The normalization constant $C_d$ equals to 
\[
C_d = \dfrac{\Gamma(d / 2)}{2 \, \pi^{d/2} \det{\Sigma}^{1/2} \, J_{d-1}(\sigma)}, 
\]
where $\Gamma$ denotes the Gamma function, and for $q \in \mathbb{N}$ and $\alpha > 0$, 
\[
J_q(\alpha) = \int_{0}^{\infty} t^q \, \exp\pa{- \dfrac{(t - 1)^2}{2 \, \alpha^2}} \, dt. 
\]
The detailed calculus are given in Appendix \ref{appendixA_ell}.
As this distribution is designed to model points spread around an ellipsoid, we only consider the case where $\sigma$ is small enough. 
More precisely, we suppose that the value of $\sigma$ is largely smaller than the length of the smallest half-axis of the ellipsoid, namely $\sqrt{\lambda_{min}(\Sigma)}$. 
This assumption means that fluctuations do not interact with the other side of the surface. 
On top of that, if this condition is not verified, the density will model points located near the center of the ellipsoid, which is not our objective.  
The ellipsoidal density of parameters $\mu = (0, 0)^T$, $\Sigma = \left( \begin{array}{cc} 4 & 0 \\ 0 & 1 \end{array} \right)$ and $\sigma^2 = 0.1$ is represented Figure \ref{density_2D}(a) and a sample from the distribution is given Figure \ref{density_2D}(b). 
The sample illustrates that fluctuations are gaussians oriented in the normal direction of the surface.

\begin{figure}[h!]\centering
\begin{tabular}{cc}
\includegraphics[scale=0.23]{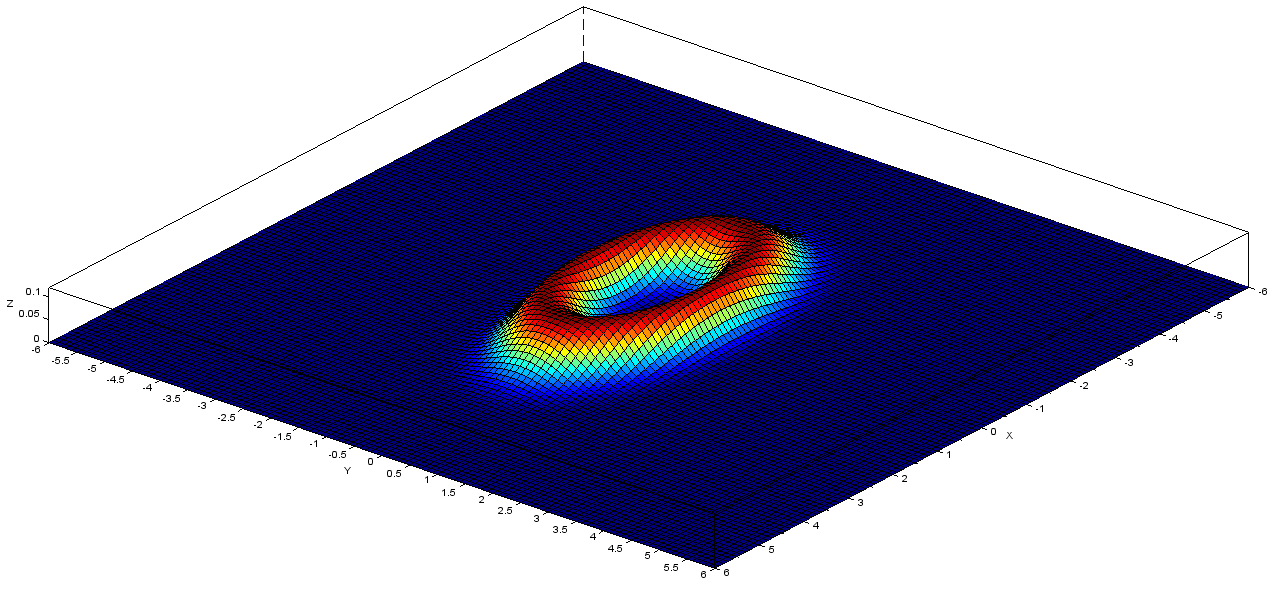} &
\includegraphics[scale=0.24]{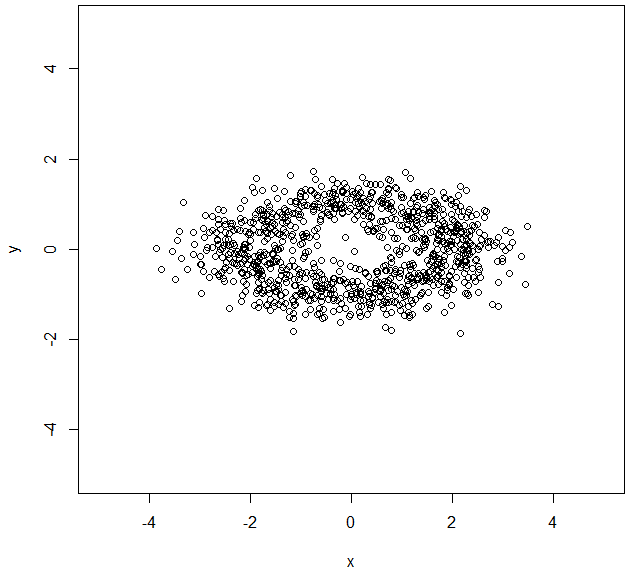} \\
(a)& (b) 
\end{tabular}
\caption{\small In (a) the ellipsoidal density in the case $d=2$ and in (b) a sample of size $n=1000$ from the distribution.}
\label{density_2D}
\end{figure}

\normalsize
We now give some useful properties and results concerning the new distribution. 
Let us remark that in the case $d \geq 2$, our ellipsoidal distribution belongs to the family of elliptical distributions. 
Indeed, if we consider the function 
\[
g_{\alpha}(t) = \dfrac{\Gamma(d/2)}{2 \pi^{d/2}J_{d-1}(\alpha)} \, \exp\pa{- \dfrac{\pa{\sqrt{t} - 1}^2}{2 \alpha^2}}, 
\]
then the density $f$ rewrites 
\[
f(x) =\ \det{\Sigma}^{-1/2} g_{\sigma}\pa{(x - \mu)^T \, \Sigma^{-1} \, (x - \mu)}, 
\]
which means from Definition 1.5.2 from \cite{Muirhead1982} that $f$ is an elliptical pdf. 
Consequently, considering a random vector $X$ of density $f$ and $Y = \Sigma^{-1/2}\pa{X - \mu}$, then $Y$ has density $g_{\sigma}(y^T \, y)$ and can be written under the form $Y = W \, U$, where $W$ and $U$ are independent, $U$ is uniformly distributed on the d-dimensional unit sphere and $W$ has the density 
\[
\varphi(t) = \dfrac{2 \pi^{d/2}}{\Gamma(d/2)} \, t^{d-1} , \quad  \quad  g_{\sigma}(t^2) = \dfrac{t^{d-1}}{J_{d-1}(\sigma)} \, \exp\pa{- \dfrac{(t - 1)^2}{2 \sigma^2}} \, \mathbf{1}_{t \geq 0}. 
\]
The random vector $X$ can therefore be written 
\[
X = \mu + \Sigma^{1/2} \, W \, U. 
\]
%
Let us now focus on the first moments of $X$ and some other results which will be usefull to prove some estimators properties in Section \ref{section_estimation}. 

\begin{lem} \label{thm_ellipsoid}
Let $X$ be a random vector of $\mathbb{R}^d$ (with $d \geq 2$) of pdf $f$ introduced in \eqref{def_densite}. 
For $q\geq 1$, we set $J_q = J_q(\sigma)$. 
Then, 
\[
\dE\cro{X} = \mu \qquad \mbox{ and } \qquad 
\Var\cro{X} = \dE\cro{\pa{X - \mu}\pa{X - \mu}^T} = 
\dfrac{J_{d+1}}{d \, J_{d-1}} \, \Sigma = d^{-1} \SigmaEt
\]
where we set $\displaystyle \SigmaEt = \dfrac{J_{d+1}}{J_{d-1}}\,\Sigma$.
In addition,
\[
\Var\cro{\dMah{X}{\mu}{\SigmaEt}}\ =\ \dfrac{J_{d+1}J_{d-1} - J_d^2}{J_{d+1}J_{d-1}}
\ = \ \widetilde\sigma^2 .
\]
\end{lem}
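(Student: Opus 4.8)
The plan is to rely entirely on the stochastic representation $X = \mu + \Sigma^{1/2}\,W\,U$ recalled just above, in which $W$ and $U$ are independent, $U$ is uniformly distributed on the unit sphere of $\mathbb{R}^d$ and $W$ has density $t \mapsto J_{d-1}^{-1}\,t^{d-1}\exp\pa{-(t-1)^2/(2\sigma^2)}\,\mathbf{1}_{t\geq 0}$. The point is that, once expanded, each of the three claimed identities decouples into a product of a moment of $W$ and a moment of $U$, so it is enough to compute these separately.

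First I would record the relevant moments of $W$. Writing the expectation as an integral and recognizing the functions $J_q = J_q(\sigma)$, one gets $\dE\cro{W^q} = J_{q+d-1}/J_{d-1}$ for every $q \geq 0$ (all these integrals being finite since $\sigma > 0$), and in particular $\dE\cro{W} = J_d/J_{d-1}$, $\dE\cro{W^2} = J_{d+1}/J_{d-1}$, whence $\Var\cro{W} = J_{d+1}/J_{d-1} - (J_d/J_{d-1})^2$. For $U$, the symmetry $U \overset{\mathrm{d}}{=} -U$ of the uniform law on the sphere gives $\dE\cro{U} = 0$, and its rotational invariance forces $\dE\cro{UU^T}$ to be a scalar matrix, whose value $d^{-1} I_d$ is pinned down by taking the trace together with $\norm{U} = 1$.

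The first two assertions then follow immediately from the independence of $W$ and $U$: one has $\dE\cro{X} = \mu + \Sigma^{1/2}\dE\cro{W}\dE\cro{U} = \mu$, and
\[
\Var\cro{X} = \dE\cro{W^2}\,\Sigma^{1/2}\dE\cro{UU^T}\Sigma^{1/2} = \frac{J_{d+1}}{d\,J_{d-1}}\,\Sigma = d^{-1}\SigmaEt .
\]
For the last identity I would substitute the representation into the quadratic form and use $\SigmaEt = \tfrac{J_{d+1}}{J_{d-1}}\Sigma$ together with $\norm{U}^2 = 1$, which yields $(X-\mu)^T\SigmaEt^{-1}(X-\mu) = \tfrac{J_{d-1}}{J_{d+1}}\,W^2$, hence $\dMah{X}{\mu}{\SigmaEt} = \sqrt{J_{d-1}/J_{d+1}}\,W$ (the positive root, since $W\geq 0$). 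Taking the variance and plugging in the moments of $W$ computed above gives exactly $(J_{d+1}J_{d-1} - J_d^2)/(J_{d+1}J_{d-1}) = \widetilde\sigma^2$.

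There is no genuine obstacle here: the argument is elementary linear algebra plus one‑dimensional integration. The only steps deserving a little care are the bookkeeping of the shifted indices in $\dE\cro{W^q} = J_{q+d-1}/J_{d-1}$ (this is where the $J_{d\pm 1}$, $J_d$ enter) and the observation that the Mahalanobis quadratic form collapses to a multiple of $W^2$ precisely because $U$ lives on the unit sphere; with those in hand the three formulas drop out.
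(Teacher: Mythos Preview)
Your proposal is correct and follows essentially the same route as the paper: both arguments rest on the stochastic representation $X = \mu + \Sigma^{1/2} W U$, the moment formula $\dE\cro{W^q} = J_{q+d-1}/J_{d-1}$, the standard properties $\dE\cro{U}=0$, $\dE\cro{UU^T}=d^{-1}I_d$ of the uniform law on the sphere, and the observation that the Mahalanobis distance with respect to $\SigmaEt$ reduces to $(\dE\cro{W^2})^{-1/2}W$. You give a bit more justification for the sphere moments and spell out the final variance computation, but the structure is the same.
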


The proof is given in Section \ref{sec::proofs::lemma}.

\begin{rem} \label{remq_thm_ellisoid}
\rm{

In the rest of the paper, we suppose that $\sigma \leq 0.02$, which is a reasonable value for our experimental settings (cf Appendix \ref{appendixA_bis_ell}). 
Results of the lemma are formulated for any dimension $d \geq 2$ and for any value of $\sigma$. 
In the particular case where $d=3$ and $\sigma$ is small enough, we can approximate  
$J_2$ by $C \, \pa{1 + \sigma^2}$, $J_3$ by $C \, \pa{1 + 3 \sigma^2}$ and $J_4$ by $C \, \pa{1 + 6 \sigma^2 + 3 \sigma^4}$, where $C = \sigma \, \sqrt{2 \pi}$.

Consequently, one can approximate $\SigmaEt = (1 + 6\sigma^2 + 3\sigma^4) \, (1 + \sigma^2)^{-1} \, \Sigma$ and $\tilde{\sigma}^2 = (1 + 3 \sigma^4) \, (1 + 7 \sigma^2 + 9 \sigma^4 + 3 \sigma^6)^{-1} \, \sigma^2$. 
Finally, as $\sigma$ is assumed sufficently small, we deduce that $\SigmaEt \simeq \Sigma$ and $\tilde{\sigma}^2 \simeq \sigma^2$ (cf. Appendix \ref{appendixA_ell} and Appendix \ref{appendixA_bis_ell}) . 
}
\end{rem}

The following lemma gives results that will improve the estimation of parameter $\mu$ via a Back-Fitting (BF for short) type algorithm. 

\begin{lem} \label{tthm_Xetoile}
Let $X$ be a random vector of $\mathbb{R}^d$ (with $d \geq 2$) of pdf $f$ introduced in \eqref{def_densite}. 
Denoting
\[
X^\ast :=  X - \dfrac{(X - \mu)}{\dMah{X}{\mu}{\SigmaEt}},
\]
then  
\[
 \dE\cro{X^\ast} = \mu \qquad \qquad \mbox{ and } \qquad \qquad \Var\pa{X^\ast} = \dE\cro{\pa{W - \sqrt{\dE\cro{W^2}}}^2} \dfrac{\Sigma}{d},
\]
and $\norm{\Var{\cro{X^\ast}}} < \norm{\Var\cro{X}}$. 
\end{lem}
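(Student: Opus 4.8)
The plan is to rely throughout on the stochastic representation $X = \mu + \Sigma^{1/2}\,W\,U$ recalled just before the statement, where $\Sigma^{1/2}$ denotes the symmetric square root of $\Sigma$, $U$ is uniformly distributed on the unit sphere of $\mathbb{R}^d$, $W$ is the positive scalar with density proportional to $t^{d-1}\exp\pa{-(t-1)^2/(2\sigma^2)}$ on $[0,\infty)$, and $W$ and $U$ are independent. In particular $\dE\cro{W} = J_d/J_{d-1}$ and $\dE\cro{W^2} = J_{d+1}/J_{d-1}$, consistently with Lemma \ref{thm_ellipsoid}.

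First I would compute the Mahalanobis factor occurring in the definition of $X^\ast$. Since $\SigmaEt = \pa{J_{d+1}/J_{d-1}}\Sigma$ and $U^TU = 1$, substituting $X - \mu = \Sigma^{1/2}WU$ and using $\Sigma^{1/2}\Sigma^{-1}\Sigma^{1/2} = I_d$ gives
\[
(X-\mu)^T\,\SigmaEt^{-1}\,(X-\mu) = \frac{J_{d-1}}{J_{d+1}}\,W^2 = \frac{W^2}{\dE\cro{W^2}},
\]
hence $\dMah{X}{\mu}{\SigmaEt} = W\big/\sqrt{\dE\cro{W^2}}$. Plugging this together with $X - \mu = \Sigma^{1/2}WU$ into the definition of $X^\ast$, the scalar $W$ cancels and one is left with the closed form
\[
X^\ast - \mu = \Sigma^{1/2}\,U\,\pab{\,W - \sqrt{\dE\cro{W^2}}\,}.
\]

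From this representation the two identities follow at once from the independence of $W$ and $U$. Taking expectations and using $\dE\cro{U} = 0$ (the law of $U$ being symmetric about the origin) yields $\dE\cro{X^\ast} = \mu$. For the covariance, independence gives
\[
\Var\pa{X^\ast} = \dE\crob{\pab{\,W - \sqrt{\dE\cro{W^2}}\,}^2}\;\Sigma^{1/2}\,\dE\cro{UU^T}\,\Sigma^{1/2},
\]
and since $\dE\cro{UU^T} = I_d/d$ (the matrix $\dE\cro{UU^T}$ is rotation invariant, hence a multiple of $I_d$, and taking the trace fixes the constant), this is exactly $\dE\crob{\pab{\,W - \sqrt{\dE\cro{W^2}}\,}^2}\,\Sigma/d$, as claimed.

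It remains to prove the norm inequality, which is the only step requiring a genuine estimate. By Lemma \ref{thm_ellipsoid} one has $\Var\cro{X} = \pa{\dE\cro{W^2}/d}\,\Sigma$, so both covariance matrices are positive multiples of the same positive definite matrix $\Sigma$, and $\norm{\Var\cro{X^\ast}} < \norm{\Var\cro{X}}$ is equivalent to the scalar inequality $\dE\crob{\pab{\,W - \sqrt{\dE\cro{W^2}}\,}^2} < \dE\cro{W^2}$. Writing $m = \sqrt{\dE\cro{W^2}}$ and using the decompositions $\dE\crob{\pa{W-m}^2} = \Var(W) + \pa{m - \dE\cro{W}}^2$ and $\dE\cro{W^2} = \Var(W) + \dE\cro{W}^2$, this reduces to $\pa{m - \dE\cro{W}}^2 < \dE\cro{W}^2$; since $\dE\cro{W} > 0$ and $m \geq \dE\cro{W}$ by the Cauchy--Schwarz inequality, it is in turn equivalent to the single inequality $\sqrt{\dE\cro{W^2}} < 2\,\dE\cro{W}$, i.e. to the coefficient of variation of $W$ being smaller than $\sqrt 3$. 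I expect this to be the main point of the argument: I would establish it from the concentration of $W$ around $1$, guaranteed by the standing smallness assumption on $\sigma$ (see Remark \ref{remq_thm_ellisoid}), under which $\dE\cro{W} = J_d/J_{d-1} \to 1$ and $\dE\cro{W^2} = J_{d+1}/J_{d-1} \to 1$ as $\sigma \to 0$, so that $\sqrt{\dE\cro{W^2}} < 2\,\dE\cro{W}$ holds with a comfortable margin.
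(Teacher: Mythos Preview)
Your argument follows exactly the paper's route: use the stochastic representation $X=\mu+\Sigma^{1/2}WU$, rewrite $X^\ast-\mu$ as $\Sigma^{1/2}U$ times a scalar in $W$, and read off the mean and covariance from $\dE\cro{U}=0$ and $\dE\cro{UU^T}=I_d/d$. Two remarks are worth making. First, your intermediate expression $X^\ast-\mu=\Sigma^{1/2}U\bigl(W-\sqrt{\dE\cro{W^2}}\bigr)$ is the correct one; the paper's proof writes $W-1$ at this step, which is a slip (it would correspond to using $\Sigma$ rather than $\SigmaEt$ in the Mahalanobis distance), although the paper then states the right variance anyway. Second, the paper's proof does not actually justify the inequality $\norm{\Var\cro{X^\ast}}<\norm{\Var\cro{X}}$; your reduction to $\sqrt{\dE\cro{W^2}}<2\,\dE\cro{W}$ and the appeal to the standing smallness assumption on $\sigma$ (Remark \ref{remq_thm_ellisoid}) fills that gap cleanly.
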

The proof is given in Section \ref{sec::proofs::lemma}.
Lemma \ref{thm_ellipsoid} and \ref{tthm_Xetoile} combined with Remark \ref{remq_thm_ellisoid} will be particularly usefull for the estimation of the parameters of the distribution from a sample (see Section \ref{section_estimation}). 

\subsection{The finite mixture model}\label{ssection_mm}

A finite mixture model is a probabilistic model characterised by a density function $h$ of the form 
\[
h(x | \Theta) = \sum_{k=1}^K \pi_k \, f(x | \theta_k), 
\]
where $f$ belongs to a parametric density family, $K$ is the number of components of the mixture, $(\pi_k)$ are the mixing weights satisfying $\pi_k \geq 0$ and $\sum_{k=1}^K \pi_k = 1$ and $\Theta = \pa{\pi_1, \ldots , \pi_K, \theta_1, \ldots  , \theta_K}$ is the vector containing all the parameters of the mixture. 
%


%
The choice of the common density $f$ depends on the data set. 
In many applications, Gaussian or Poisson densities have been used. 
In our case, since our data are 3-dimensional points spread around ellipsoidal surfaces, the pdf $f$ is set to the new density introduced in \eqref{def_densite} 
and the parameter vector $\theta_k$ is then defined as $\theta_k = \pa{\mu_k, \Sigma_k, \sigma_k^2}$. 

\section{Parameters estimation}\label{section_estimation}

%
This section concerns the estimation of the unknown parameters of the  model from a sample in the case $d=3$ which is the major concern of the paper. 
Let us note that the estimation of the model in the case $d=2$ is straightforward. 
For shake of clarity, sections \ref{ssection_directestim} and \ref{ssection_iterestim} are dedicated to the estimation of the parameters of the ellipsoidal density $f$ while section \ref{ssection_estimmm} deals with the estimation of the ellipsoidal mixture model of density $h$. 

\subsection{Direct estimation of the parameters}\label{ssection_directestim}

%
Let $X_1, \ldots, X_n$ be a sample of independent and identically distributed  random vectors of $\mathbb{R}^3$ with density $f$ introduced in \eqref{def_densite} with $d=3$ and under the assumption that $\sigma \ll \sqrt{\lambda_{min}(\Sigma)}$. 
The objective is to estimate the parameters $\mu$, $\Sigma$ and $\sigma^2$ characterizing the distribution.

Let us start with the estimation of the center $\mu$. 
To this aim, we consider the empirical mean of the sample 
\[
\overline{X}_n = \dfrac{1}{n} \sum_{i=1}^n X_i.
\]
It is clear from Theorem \ref{thm_ellipsoid} that $\Xbarn$ is an unbiased estimator of $\mu$ and its rate of convergence derived from the law of the iterated logarithm, is given in  Theorem \ref{thm_conv_direct}. 
We now consider the estimation of the shape matrix $\Sigma$.
We recall for this purpose the usual covariance matrix estimator 
\[
S_n = \dfrac{1}{n} \sum_{i=1}^n \pa{X_i - \mu}\pa{X_i - \mu}^T .
\]
Thanks to the results of Theorem \ref{thm_ellipsoid} and the strong law of large numbers, 
we can easily show that $S_n$ is an unbiased and convergent estimator of 
$J_4(\sigma) \, \Sigma / (3\,J_2(\sigma))$. 
When $\sigma$ is sufficiently small, the ratio $J_4 \, J_2^{-1}$ 
is close to $1$ (see Remark \ref{remq_thm_ellisoid}) and in that case a reasonable estimator of $\Sigma$ is $\Sigma_n = 3 \, S_n$. 
However, the parameter $\mu$ being unknown, it is estimated by $\Xbarn$
and we finaly propose to estimate $\Sigma$ by
\[
\widehat{\Sigma}_n\ =\ 
\dfrac{3}{n} \sum_{i=1}^n \pa{X_i - \Xbarn}\pa{X_i - \Xbarn}^T .
\]
By using the decomposition 
\[
\widehat{\Sigma}_n = \Sigma_n  - 3 \, \pa{\Xbarn - \mu}\pa{\Xbarn - \mu}^T , 
\]
we can show that $\widehat{\Sigma}_n$ converges almost surely towards  $\Sigma^* = J_4(\sigma) \, J_2^{-1}(\sigma) \, \Sigma$ with rate (see Theorem \ref{thm_conv_direct}) and consequently when $\sigma$ is small, 
$\widehat{\Sigma}_n$
is a reasonable estimator of $\Sigma$.

We now focus on the estimation of parameter $\sigma^2$. 
In order to estimate this parameter, we introduce the estimator  
\[
\sigma_n^2 = \dfrac{1}{n} \sum_{j=1}^n \pa{\xi_j -  \bkE\cro{\xi_j}}^2
\]
where for $j=1, \ldots, n$, 
$\xi_j = \sqrt{\pa{X_j - \mu}^T (\Sigma^*)^{-1} \pa{X_j - \mu}}$.
From the strong law of large numbers and from results of Theorem \ref{thm_ellipsoid}, 
we can deduce that $\sigma_n^2$ converges almost surely towards   
$\widetilde\sigma^2 \simeq \sigma^2$ when $\sigma$ is sufficiently small (Remark \ref{remq_thm_ellisoid}). 
However, this estimator depends on the unknown parameters $\mu$ and $\Sigma^*$.
We therefore propose to estimate $\sigma^2$ by  
\[
\widehat\sigma_n^2 = \dfrac{1}{n} \sum_{j=1}^n \widehat\xi_j^2 - 
\pa{\dfrac{1}{n} \sum_{j=1}^n \widehat\xi_j}^2
\]
where for $j=1, \ldots, n$, 
$\widehat\xi_j = \sqrt{\pa{X_j - \Xbarn}^T \widehat\Sigma_n^{-1} \pa{X_j - \Xbarn}}$.
Convergence results are given in the following theorem. 
\begin{thm} \label{thm_conv_direct}
Let $X_1, \ldots, X_n$ be a sample of independent and identically distributed random vectors of $\mathbb{R}^3$ with pdf $f$. Then, we have 
$$\parallel\!\Xbarn - \mu\!\parallel\ =\ O\pa{a_n}, \quad 
\norm{\widehat{\Sigma}_n -\Sigma_*}\ =\ 
O\pa{a_n} \, \mbox{ and} \quad
\abs{\wh{\sigma}^2_n  - \widetilde{\sigma}^2} =  O(a_n)
\ \  \mbox{a.s.}$$

where $a_n = \sqrt{(\log\!\log n) / n}$.
\end{thm}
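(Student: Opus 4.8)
The plan is to prove the three rates in sequence, in increasing order of difficulty, relying throughout on the moment computations of Lemma \ref{thm_ellipsoid} and the law of the iterated logarithm (LIL). First I would handle $\overline{X}_n - \mu$. Since the $X_i$ are i.i.d.\ with $\dE[X_i] = \mu$ and finite variance $\Var[X_i] = d^{-1}\SigmaEt$ (by Lemma \ref{thm_ellipsoid}), the multivariate LIL applied coordinatewise gives $\limsup_n \sqrt{n/(2\log\log n)}\,|\langle \overline{X}_n - \mu, e_j\rangle| < \infty$ a.s.\ for each basis vector $e_j$, hence $\|\overline{X}_n - \mu\| = O(a_n)$ a.s.\ with $a_n = \sqrt{(\log\log n)/n}$. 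This also immediately yields $\|(\overline{X}_n - \mu)(\overline{X}_n - \mu)^T\| = O(a_n^2) = o(a_n)$ a.s., which will be needed below.

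Next I would treat $\widehat{\Sigma}_n - \SigmaEt$. Using the decomposition stated in the text, $\widehat{\Sigma}_n = \Sigma_n - 3(\overline{X}_n - \mu)(\overline{X}_n - \mu)^T$ where $\Sigma_n = \frac{3}{n}\sum_i (X_i - \mu)(X_i-\mu)^T$, it suffices to control $\Sigma_n - \SigmaEt$ and to note the second term is $o(a_n)$ a.s.\ by the previous step. The matrix entries of $\Sigma_n$ are empirical means of i.i.d.\ random variables $3(X_i - \mu)_k(X_i-\mu)_\ell$ with expectation $(\SigmaEt)_{k\ell}$ (again Lemma \ref{thm_ellipsoid}, since $d=3$ makes $J_4/(3J_2)\cdot 3 = J_4/J_2$ the right constant); here one needs these products to have finite variance, i.e.\ $\dE\|X\|^4 < \infty$, which follows from $X = \mu + \Sigma^{1/2} W U$ with $W$ having the density $\varphi$ given in the excerpt — all polynomial moments of $W$ are finite because of the Gaussian-type tail of $\exp(-(t-1)^2/(2\sigma^2))$. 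Then the LIL applied entrywise gives $\|\Sigma_n - \SigmaEt\| = O(a_n)$ a.s., and combining yields $\|\widehat{\Sigma}_n - \SigmaEt\| = O(a_n)$ a.s.

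The last and hardest part is $\widehat{\sigma}_n^2 - \widetilde{\sigma}^2$, because $\widehat{\xi}_j$ depends on the estimated $\overline{X}_n$ and $\widehat{\Sigma}_n$ rather than the true $\mu, \SigmaEt$. The strategy is to compare $\widehat{\sigma}_n^2$ with the infeasible $\sigma_n^2 = \frac1n\sum_j(\xi_j - \dE[\xi_j])^2$ built from the true parameters, for which the LIL directly gives $|\sigma_n^2 - \widetilde{\sigma}^2| = O(a_n)$ a.s.\ (this uses that $\xi_j$ has finite fourth moment, again via the $W$-representation). For the comparison, write $\widehat{\xi}_j^2 - \xi_j^2 = (X_j - \overline{X}_n)^T\widehat{\Sigma}_n^{-1}(X_j-\overline{X}_n) - (X_j-\mu)^T\SigmaEt^{-1}(X_j-\mu)$ and expand: the perturbation $\overline{X}_n - \mu$ contributes terms of size $O(a_n)\cdot O(\frac1n\sum_j\|X_j-\mu\|) = O(a_n)$ a.s., and the perturbation $\widehat{\Sigma}_n^{-1} - \SigmaEt^{-1}$ is $O(a_n)$ in operator norm (since $\SigmaEt$ is invertible and $\widehat{\Sigma}_n \to \SigmaEt$, using $\|\widehat{\Sigma}_n^{-1} - \SigmaEt^{-1}\| \le \|\widehat{\Sigma}_n^{-1}\|\,\|\SigmaEt^{-1}\|\,\|\widehat{\Sigma}_n - \SigmaEt\|$), contributing $O(a_n)\cdot O(\frac1n\sum_j\|X_j-\mu\|^2) = O(a_n)$ a.s.\ by the SLLN. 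A parallel bound handles $\widehat{\xi}_j - \xi_j$ after taking square roots (using that $\xi_j$ is bounded away from $0$ with overwhelming regularity, or simply bounding $|\sqrt{a}-\sqrt{b}| \le |a-b|/\sqrt{b}$ on the bulk and controlling the rest). Assembling, $|\widehat{\sigma}_n^2 - \sigma_n^2| = O(a_n)$ a.s., and the triangle inequality finishes the proof.

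The main obstacle is this last comparison step: one must carefully track how the two estimation errors $\overline{X}_n - \mu$ and $\widehat{\Sigma}_n - \SigmaEt$ propagate through the nonlinear map $x \mapsto \sqrt{x^T A^{-1} x}$, and in particular handle the square root near where its argument is small. The clean way is to do all the bookkeeping at the level of $\widehat{\xi}_j^2$ (which is a smooth quadratic form, no square-root issue) to get the rate for $\frac1n\sum_j\widehat{\xi}_j^2$, and separately for $\frac1n\sum_j\widehat{\xi}_j$ use a first-order Taylor bound on $\sqrt{\cdot}$ together with the fact that only an asymptotically negligible fraction of the $\xi_j$ can be close to $0$ (since $W$ concentrates near $1$); then recombine. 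Everything else is a routine application of the LIL and the SLLN once the requisite finite fourth moments are noted.
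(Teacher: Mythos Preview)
Your overall strategy matches the paper's: LIL for $\overline{X}_n$ and (entrywise) for $\widehat{\Sigma}_n$ using $\dE\|X\|^4<\infty$, then the decomposition $\widehat\sigma_n^2=A_n-B_n^2$ with $A_n=\frac1n\sum_j\widehat\xi_j^2$ and $B_n=\frac1n\sum_j\widehat\xi_j$, controlling the matrix and center perturbations separately and invoking the LIL on the ``true-parameter'' versions. Your treatment of $A_n$ is essentially identical to the paper's.

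The one place where the paper is sharper than your sketch is the square-root step for $B_n$. You propose to bound $|\sqrt a-\sqrt b|\le |a-b|/\sqrt b$ ``on the bulk'' and argue separately that few $\xi_j$ are near~$0$; this works but is heavier than needed and you leave it vague. The paper avoids any probabilistic splitting by two purely algebraic observations. First, for the matrix perturbation it uses the deterministic lower bound $\xi_j\ge \lambda_{\min}(\SigmaEt^{-1})^{1/2}\|X_j-\mu\|$, so that
\[
\frac{\bigl|(X_j-\mu)^T(\widehat\Sigma_n^{-1}-\SigmaEt^{-1})(X_j-\mu)\bigr|}{\xi_j}
\ \le\ \lambda_{\max}(\SigmaEt)^{1/2}\,\|\widehat\Sigma_n^{-1}-\SigmaEt^{-1}\|\,\|X_j-\mu\|,
\]
and averaging gives $O(a_n)$ directly---no ``bulk/rest'' argument is needed because the $\|X_j-\mu\|$ in the denominator cancels one power in the numerator. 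Second, for the center perturbation the paper uses the triangle inequality for the Mahalanobis norm,
\[
\Bigl|\widehat\xi_j-\sqrt{(X_j-\mu)^T\widehat\Sigma_n^{-1}(X_j-\mu)}\Bigr|
\ \le\ \sqrt{(\overline{X}_n-\mu)^T\widehat\Sigma_n^{-1}(\overline{X}_n-\mu)},
\]
which is uniform in $j$ and immediately $O(a_n)$. Incorporating these two tricks would make your argument complete and actually a bit shorter than what you outlined.
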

The proof is given in Appendix \ref{appendixB_ell}.

\subsection{Iterative estimation of the parameters}\label{ssection_iterestim}

%
In this section, we propose a more efficient estimation strategy, which will be of particular interest for estimating the parameters of mixture models. 
The estimate $\overline{X}_n$ is an unbiased and convergent estimator of $\mu$. 
However, this estimator has a much larger variance than another unbiased and convergent estimator of $\mu$ denoted as $\overline{X}_n^\ast$ and defined by  
\[
\overline{X}_n^\ast = \overline{X}_n - \dfrac{1}{n} \sum_{i=1}^n \dfrac{\pa{X_i - \mu}}{\dMah{X_i}{\mu}{\SigmaEt}}.
\]
Indeed, using the results of Theorem \ref{tthm_Xetoile}, we can show that $\Xbarn^\ast$ is an unbiased and convergent estimator of $\mu$ satisfying 
$\| \Var (\overline{X}_n^\ast) \| < \| \Var (\overline{X}_n) \|$.
Nevertheless, the estimator $\overline{X}_n^\ast$ depends on unknown parameters $\mu$ and 
$\SigmaEt$ which must be estimated.
We therefore propose the following strategy to improve the estimation of parameters 
$\mu, \Sigma$ and $\sigma^2$.
First, we estimate $\mu$ by 
$\widehat{\mu}_n^{(0)} = \Xbarn$, then $\Sigma_\ast$ 
by $\widehat\Sigma_n^{(0)} = \widehat\Sigma_n$.
Then, we improve the estimation of $\mu$ by considering 
\begin{equation} \label{Def_muchap1}
\widehat{\mu}_n^{(1)} = \overline{X}_n - \dfrac{1}{n} 
\sum_{i=1}^n \dfrac{\pa{X_i - \widehat{\mu}_n^{(0)}}}
{\dMah{X_i}{\widehat{\mu}_n^{(0)}}{\pa{\widehat\Sigma_n^{(0)}}}}
\end{equation}
and the estimation of $\Sigma_\ast$ by considering 
\begin{equation} \label{sigma_1}
\widehat{\Sigma}_n^{(1)} = \dfrac{3}{n}\, \sum_{i=1}^n \pa{X_i - \widehat{\mu}_n^{(1)}} 
\pa{X_i - \widehat{\mu}_n^{(1)}}^T .
\end{equation}

The rate of convergence of these new estimates can be derived from Theorem \ref{thm_conv_direct}. In addition, 
the process \eqref{Def_muchap1}-\eqref{sigma_1} can be iterated until convergence to improve the estimations of $\mu$ and $\SigmaEt$. 
At the end of each iteration, we update the estimates as follows : $\widehat{\mu}_n^{(0)} = \widehat{\mu}_n^{(1)}$ and $\widehat{\Sigma}_n^{(0)} = \widehat{\Sigma}_n^{(1)}$.   
Although we did not succeed in proving that the backfitting estimates are theoretically better, this seems to be confirmed in practice (see Section \ref{section_experiments_simul}). 
Finally, the parameter $\sigma^2$ can then be estimated by  
\[
\widehat\sigma^2_n = \dfrac{1}{n}\,\sum_{j=1}^n \widetilde\xi_j^2
- \pa{\dfrac{1}{n}\,\sum_{j=1}^n \widetilde\xi_j}^2
\]
with $\displaystyle\widetilde\xi_i  =  
\dMah{X_i}{\widehat{\mu}_n^{(1)}}{\pa{\widehat\Sigma^{(1)}_n}}$. 

\subsection{Estimation of the mixture model parameters} \label{ssection_estimmm}

%
This section concerns the estimation of a mixture model with $K > 1$ components where the component $k$ is described by the density 

\begin{equation*}
f(x \mid \theta_k) = \frac{(2 \pi)^{-3/2}\det{\Sigma_k}^{-1/2}}{2 \, \sigma_k (1 + \sigma_k^2)} \, \exp\pa{- \dfrac{1}{2 \, \sigma_k^2} (\dMah{x}{\mu_k}{\Sigma_k} - 1)^2}
\end{equation*}
where $\theta_k = (\mu_k, \Sigma_k, \sigma_k)$ and $\sigma_k$ is supposed sufficiently small. 
Let us note that the approximated version of $C_3$ is used instead of its exact expression. 

%
Let $\textbf{x} = \pa{x_1, \ldots, x_n}$ be a sample of points, where each $x_i \in \mathbb{R}^3$ belongs to one component of the mixture. 
Given the sample $\textbf{x}$, we can estimate the parameter vector $\Theta$ of the mixture by using the Maximum-Likelihood estimator $\widehat{\Theta}$ given by 
\begin{eqnarray}\label{max_pb_em}
\widehat{\Theta} = \arg\max_{\Theta} L\pa{\textbf{x} ; \Theta}
\end{eqnarray}
where $L$ denotes the log-likelihood function defined as follows 
\begin{eqnarray}\label{mlequation}
L\pa{\textbf{x};\Theta}
= \sum_{i=1}^n \log h(x_i | \Theta)
= \sum_{i=1}^n \log\pa{\sum_{k=1}^K \pi_k \, f\pa{x_i\mid \theta_k}}
\end{eqnarray}
%

%
%
%
%
The maximization problem \eqref{max_pb_em} is difficult to solve because of the non-linearity of the log-likelihood $L$ in many parameters. 
The Expectation-Maximization (EM) algorithm has been introduced in \cite{Dempster77} to solve this problem in the case of incomplete data 
(some extensions can be found in \cite{McLachlan2008}).

%
The main idea of EM  is to introduce a hidden variable $z_i = \pa{z_{i1}, \ldots, z_{iK}}$ given the membership of each $x_i$ to the clusters $\pa{C_1, \ldots, C_K}$ . 
The hidden variable is defined by $z_{ik} = \ind_{\acc{x_i \in C_k}}$ and each  possible value leads to a clustering of data. 
Data points $\textbf{x}$ are called \textit{observed data} 
while $\textbf{z} = \pa{z_1, \ldots, z_n}$ are called \textit{missing data} since these values are not observed. 
We can then introduce $\textbf{z}$ in \eqref{mlequation} to obtain the so-called complete log-likelihood 
\begin{equation} \label{completed_loglik}
L_c(\textbf{x}, \textbf{z}; \Theta) = \sum_{i=1}^n \sum_{k=1}^K z_{ik} \pab{\log\pa{\pi_{k}} + 
\log\pa{f(x_i\mid\theta_k}}
\end{equation}

%
%
The EM algorithm consists in maximizing the conditional expectation of the complete log-likelihood \eqref{completed_loglik} given the observation $\textbf{x}$ and the current parameters estimate $\Theta^{(j)}$ at iteration $j$. 
The quantity $t_{i\ell} = \dE\cro{z_{i\ell} \mid \textbf{x}, \Theta^{(j)}}$ is called membership probability of point $x_i$ to cluster $C_\ell$ and can be estimated with Bayes rule. 
Then, the EM algorithm consists of iterating te following two steps until convergence

\begin{itemize}
\item \textbf{E-step :} Compute the membership probabilities $t_{i\ell}$ given by
\begin{eqnarray}
t_{i\ell} = \dfrac{\pi_\ell^{(j)}\, f\pa{x_i\mid\theta_\ell^{(j)}}}{\sum_{k=1}^K \pi_k^{(j)} \, f\pa{x_i\mid\theta_k^{(j)}}}
\end{eqnarray}
 using the current parameters estimate $\Theta^{(j)} = \pa{\pi_1^{(j)}, \ldots, \pi_K^{(j)}, \theta_1^{(j)}, \ldots, ,\theta_K^{(j)}}$. 
\item \textbf{M-step :} Find parameters $\Theta^{(j+1)}$ that maximize the conditional expectation of $L_c$ given the observation $\textbf{x}$ and the current parameters estimate $\Theta^{(j)}$ 
\begin{eqnarray}
Q\pa{\Theta} = \sum_{i=1}^n \sum_{k=1}^K t_{ik} \pab{\log\pa{\pi_{k}} + 
\log\pa{f(x_i\mid\theta_k}}
\end{eqnarray}

\end{itemize}

\noindent
To start the iterative algorithm, an initial value $\widehat{\Theta}^{(0)}$ must be  determined, for example using the $K$-means clustering algorithm 
(see \cite{Bishop07} for example). 

%
\noindent 
The maximization of $Q$ in the M-step is performed analytically by computing partial derivatives with respect to each parameter and involves the assumption on parameter $\sigma$ (for more details, see appendix \ref{appendixC_ell}). 
For each component $\ell \in [1, K]$ of the mixture, we obtain the following equations defining the estimators 

\begin{equation}
\left\lbrace 
\begin{array}{l}
\displaystyle \widehat{\mu}_\ell = \dfrac{1}{\sum_{i=1}^n t_{i\ell}} \, \cro{\sum_{i=1}^n t_{i\ell} \, X_i - \sum_{i=1}^n \dfrac{t_{i\ell} \, (X_i - \widehat{\mu}_\ell)}{\xi_{i, \ell}}} \\
\displaystyle \widehat{\Sigma}_\ell = \dfrac{3}{\sum_{i=1}^n t_{i\ell}} \, \sum_{i=1}^n t_{i\ell} (X_i - \widehat{\mu}_\ell) (X_i - \widehat{\mu}_\ell)^T \\
\displaystyle \widehat{\sigma}_\ell^2 = \dfrac{1}{\sum_{i=1}^n t_{i\ell}} \, \sum_{i=1}^n t_{i\ell} \pa{\xi_{i, \ell} - \overline{\xi}_{\ell}}^2 \\ \\
\displaystyle \widehat{\pi}_\ell = \dfrac{1}{n} \, \sum_{i=1}^n t_{i\ell} 
\end{array}
\right.
\end{equation}
with 
\begin{equation*}
\xi_{i, \ell} = \dMah{X_i}{\widehat{\mu}_\ell}{\widehat{\Sigma}_\ell} \quad \mbox{ and } \quad 
\overline{\xi}_{\ell} = \dfrac{1}{n} \sum_{i=1}^n \xi_{i, \ell}.
\end{equation*}
In addition, since estimators $\mu$ and $\Sigma$ depends to each other,  estimates of $\mu$ and $\Sigma$ are computed using the iterative backfitting procedure defined by \eqref{Def_muchap1}-\eqref{sigma_1}. 
The EM algorithm and the iterative backfiting procedure have a similar stopping criterion. 
Algorithms are stopped when the difference betwen consecutive estimated parameters values is small or when a maximal number of iterations has been reached. 
The behaviour of the estimation algorithm will be studied on simulated data in the following section. 

\section{Experiments on simulated data}\label{section_experiments_simul}

The objective of the experiments carried out in this section is to show that the proposed estimation methods (backfitting BF and EM) behave as expected in simulations. 
Parameters of the simulated distributions have been chosen such that the assumption on parameter $\sigma$ is satisfied. 
Samples of a random variable $X$ of density $f$ have been simulated in dimensions $d=2$ and $d=3$ using  the decomposition $X = \mu + \Sigma^{1/2} \, W \, U$, a rejection sampling method on the random variable $W$ and an inverse transform method on the random vector $U$. 
The case $d=2$ has been considered to compare the estimation method to a standard algorithm. 
More precisely, the ellipse estimation method BF is compared to \cite{Fitz95} implemented in the open source OpenCV library \cite{OpenCV}. 
Numerical results will be analyzed in the case of a single ellipsoid in section \ref{ssection_simul1}  and for a mixture model in section \ref{ssection_simulK}. 
 
\subsection{The case of the single ellipsoid}\label{ssection_simul1}

%
The first set of experiments deals with the modeling of a single ellipsoid. 
We first consider random vectors $X_d$, with $d=3$ with ellipsoidal density $f$ introduced in \eqref{def_densite}.  
For each experiment, we simulate $N=200$ independant samples of realizations of the random vector $X_d$.  
The parameters are set to $\mu_{(2)} = (0, 0)^T$, $\small \Sigma_{(2)} =
\left( \begin{array}{ccc}
100^2 & 0 \\ 
0 & 50^2
\end{array} \right)$, $\sigma_{(2)} = 0.01$ for $d=2$ and $\mu_{(3)} = (0, 0, 0)^T$, $\small \Sigma_{(3)} =
\left( \begin{array}{ccc}
100^2 & 0 & 0 \\ 
0 & 50^2 & 0 \\
0 & 0 & 50^2
\end{array} \right)$, $\sigma_{(3)} = 0.01$ for $d=3$.

Figure \ref{ellipsoide_1} shows boxplots describing the $N$ estimations of parameters $\mu_{(3)}$, $\Sigma_{(3)}$ and $\sigma_{(3)}$ obtained with the BF algorithm for different sample sizes. 
For the shape matrix, we compute the error $E(\Sigma) = \pa{\sum_{i,j=1}^{d } \pa{\wh{\Sigma}_{i, j} - \Sigma_{i, j}}^2}^{1/2}$. 
We note that the estimated values are close to the expected ones for a sufficiently high number of observations. 
It is clear that when the sample size increases, the accuracy of the results is improved and the variability decreases. 
Let us note that the same remarks can be made for the results obtained in the case $d=2$. 

\begin{figure}[h!]
\centering 
\begin{tabular}{ccc}
\includegraphics[scale=0.3]{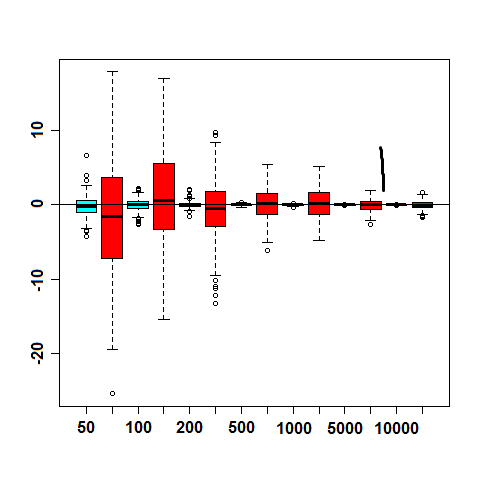} &
\includegraphics[scale=0.3]{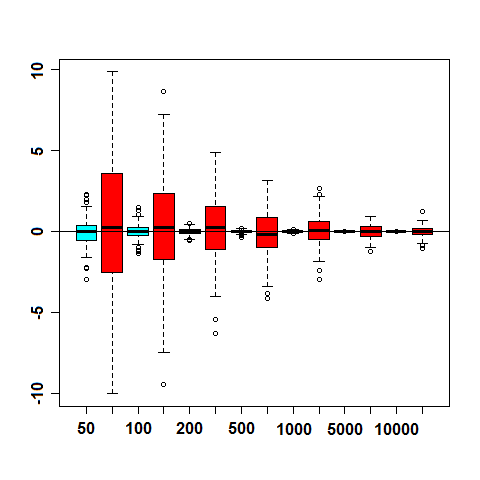} &
\includegraphics[scale=0.3]{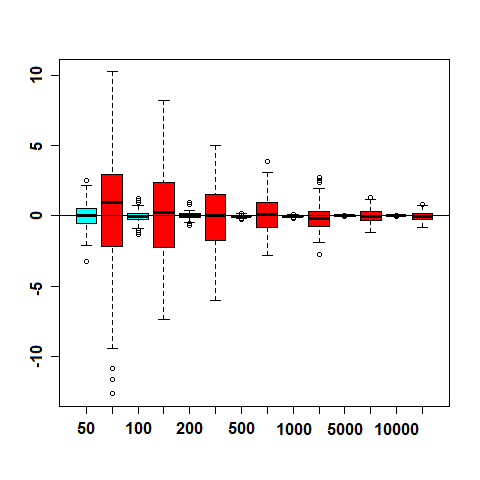} \\
\includegraphics[scale=0.23]{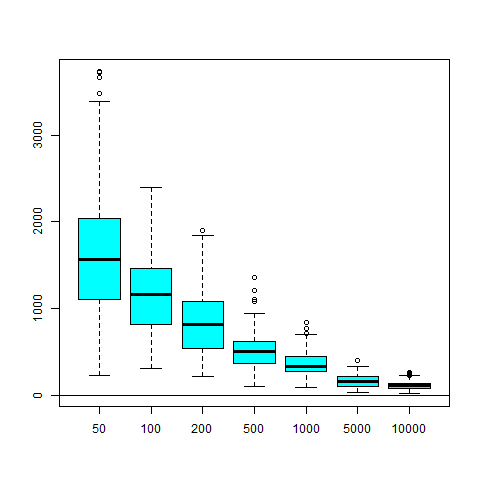} &
\includegraphics[scale=0.3]{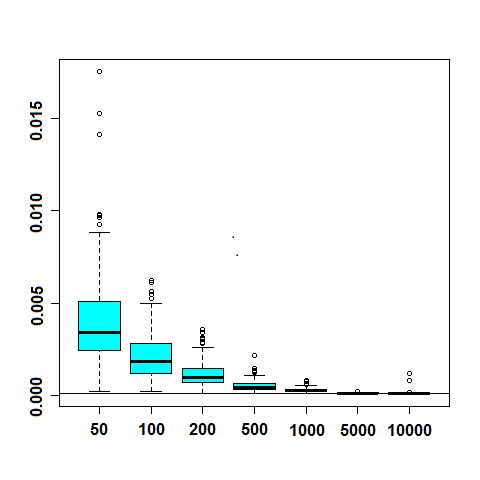} &
\includegraphics[scale=0.25]{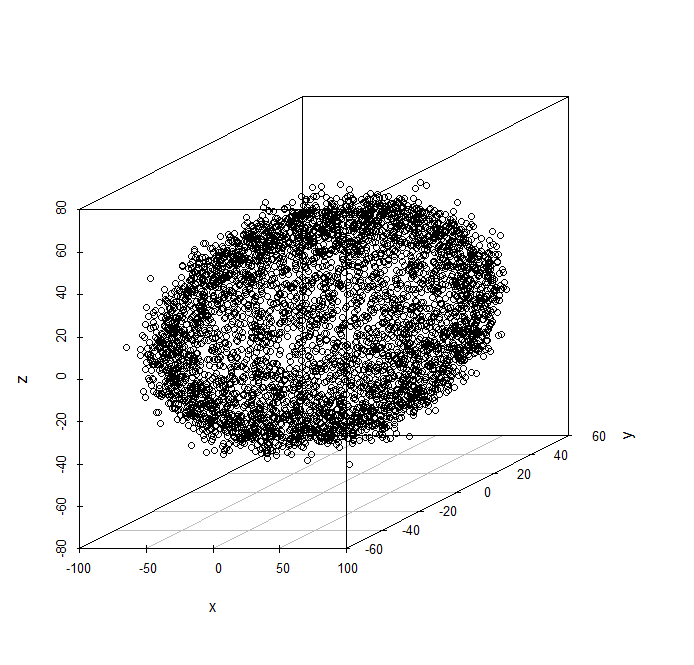}
\end{tabular}
\caption{On the first row, boxplots of estimates of $\mu_x$, $\mu_y$ and $\mu_z$. On the second row, boxplots of $E(\Sigma)$, estimates of  $\sigma$ and a sample of size $n=1000$.}
\label{ellipsoide_1}
\end{figure}

%

In section \ref{ssection_iterestim}, we mentionned that the variance of the estimator $\overline{X}_n^\ast$ is smaller than the variance of the empirical mean $\overline{X}_n$. 
A comparison between these two estimators is presented on Figure \ref{comp_center} for $d=3$. 
We observe that the variability of the center estimates is widely smaller with BF than with the empirical mean. 

\begin{figure}[h!]
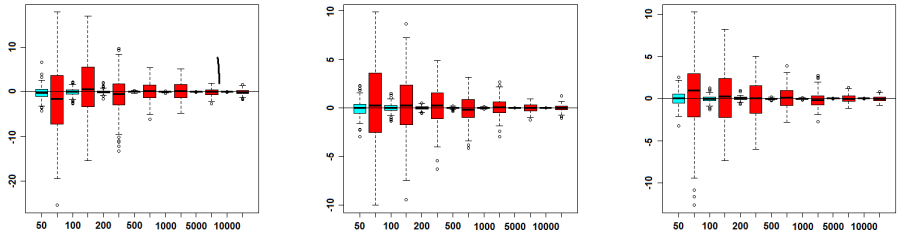

\centering 
\begin{tabular}{ccc}
\includegraphics[scale=0.3]{bp_mux.png} &
\includegraphics[scale=0.3]{bp_muy.png} &
\includegraphics[scale=0.3]{bp_muz.png}
\end{tabular}
\caption{Boxplot of estimates of parameters $\mu_x$, $\mu_y$ and $\mu_z$ for different sample sizes $n$. The BF method is plotted in blue and the empirical mean in red.}
\label{comp_center}
\end{figure}

Finally, we compare the proposed estimation method to the ellipse fitting algorithm \cite{Fitz95} implemented in the open source OpenCV library \cite{OpenCV}. 
This reference method is widely used in computer vision to fit an ellipse to a set of image edges. 
We simulate samples from the density \eqref{def_densite} with $d=2$ and with parameters $\mu = \pa{0, 0}$, $\sigma = 0.01$ and $\Sigma = \left( \begin{array}{cc}
100^2 & 50^2 \\ 
50^2 & 50^2
\end{array} \right)$. 
Obtained results are plotted on Figure \ref{simul_compocv}.
For the centre estimation, we observe that the proposed method offers estimations as accurate as the reference one. 
Moreover, the variability of the estimates is of the same order of magnitude for the two compared algorithms.  
For the shape matrix estimation, we note that the BF algorithm offers slightly better estimates than the reference one. 
However, the variability of the estimates computed with the BF method is larger.

\begin{figure}[h!]
\centering 
\begin{tabular}{ccc}
\includegraphics[scale=0.24]{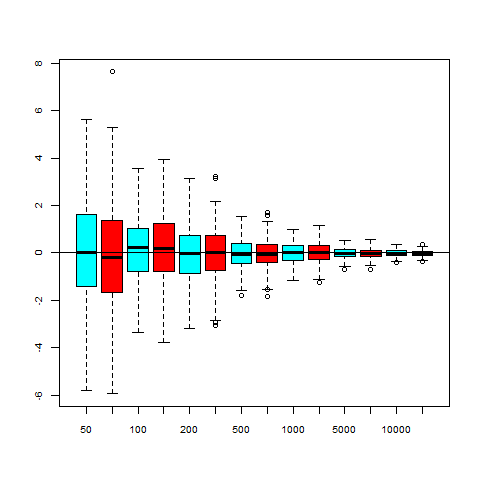} &
\includegraphics[scale=0.24]{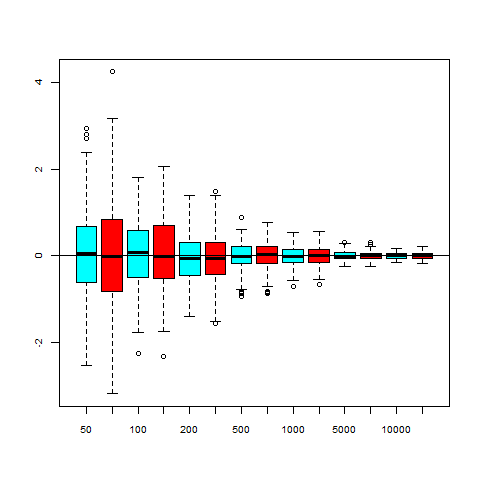} &
\includegraphics[scale=0.24]{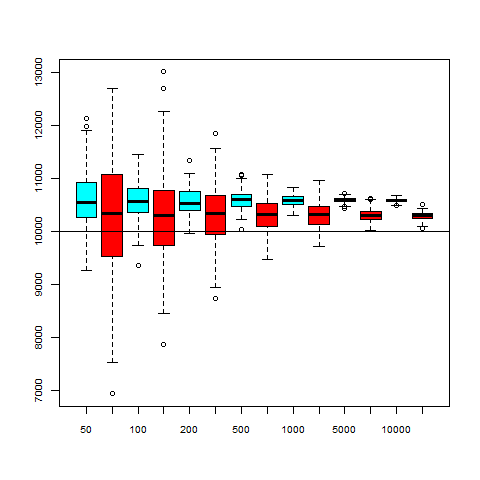} \\
\includegraphics[scale=0.24]{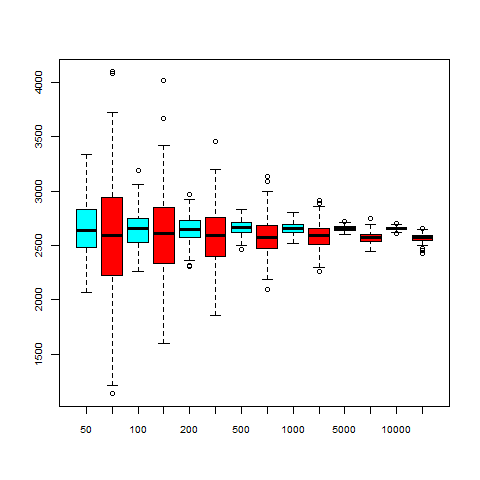} &
\includegraphics[scale=0.24]{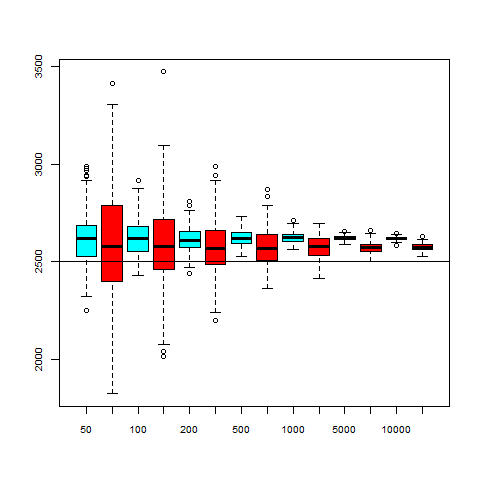} &

\end{tabular}
\caption{On the first row, boxplots of estimates of $\mu_x$, $\mu_y$ and $\Sigma_{11}$. On the second row, estimates of $\Sigma_{12}$ and $\Sigma_{22}$. The BF method is plotted in red and the reference method is plotted in blue.}
\label{simul_compocv}
\end{figure}

\subsection{A mixture model with three components}\label{ssection_simulK}

The aim of this second set of experiments is to show that the EM algorithm is able to separate and model several overlapping ellipsoids. 
The EM algorithm is initialized with a parameter vector $\Theta^{(0)}$ computed from a classification obtained with the K-means clustering algorithm. 
We consider an ellipsoidal mixture model made up of 3 overlapping components $E_1$, $E_2$ and $E_3$. 
We simulate a sample of size $n=3000$. 
Parameters of the model are given in Table \ref{table_params}. 
The three shape matrices are the same for all the components and set to 
$\left( \begin{array}{ccc}
1000^2 & 0 & 0 \\ 
0 & 500^2 & 0 \\
0 & 0 & 500^2
\end{array} \right)$. 

\begin{table}[h!]
\centering
\caption{Exact parameters values.}
\label{table_params}
\begin{tabular}{|c|c|c|c|c|c|}
\hline 
  & $\mu_x$ & $\mu_y$ & $\mu_z$ & $\sigma$ & $\pi$ \\ 
\hline 
E1 & 0 & 0 & 0 & 0.01 & 1/3 \\ 
\hline 
E2 & 1800 & 0 & 0 & 0.01 & 1/3 \\ 
\hline 
E3 & 3600 & 0 & 0 & 0.01 & 1/3 \\ 
\hline  
\end{tabular}
\end{table}

Results are presented on Figures \ref{simul_melange3_poids}, \ref{simul_melange3_datares}, \ref{simul_melange_3_1} and \ref{simul_melange_3_2}.  
We observe that the variability of the estimations decreases when the sample size increases. 
We remark an important number of bad estimations when the sample size is small. 
The estimation of the shape matrix is more difficult, especially for $E_2$,  because of the intersection of the distributions. 
Observations located between two ellipsoids contributes to the estimation of the two component.   

\begin{figure}[h!]
\centering 
\begin{tabular}{ccc}
\includegraphics[scale=0.3]{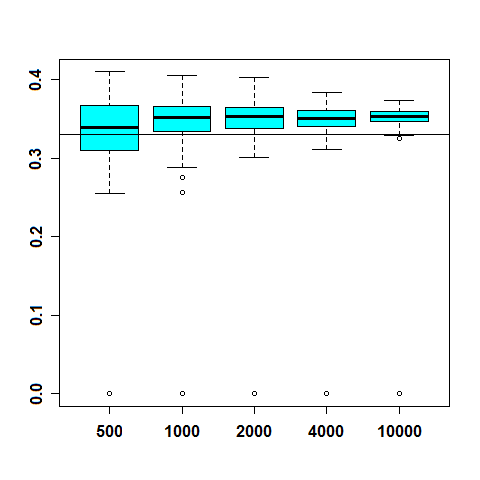} &
\includegraphics[scale=0.3]{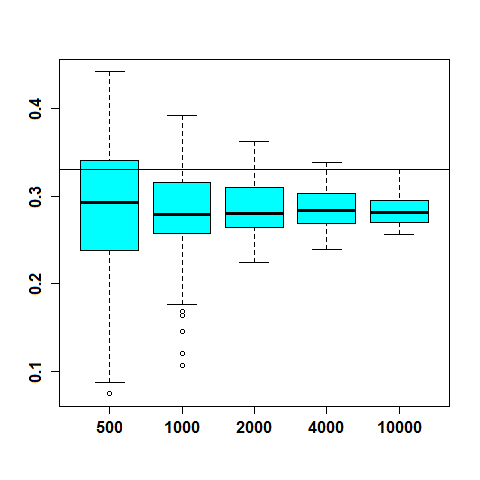} &
\includegraphics[scale=0.3]{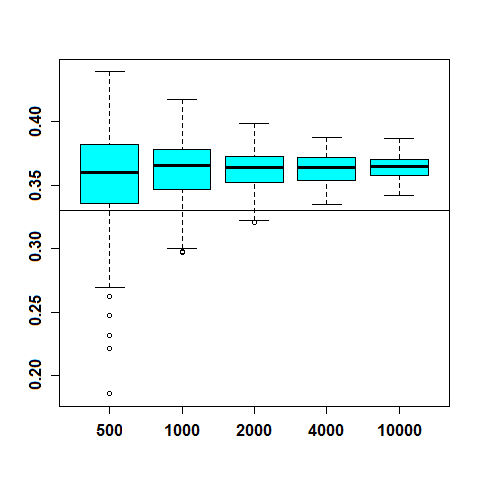}
\end{tabular}
\caption{Boxplots of estimates of the mixture weights $\pi_1$, $\pi_2$ and $\pi_3$.}
\label{simul_melange3_poids}
\end{figure}

\begin{figure}[h!]
\centering 
\begin{tabular}{cc}
\includegraphics[scale=0.3]{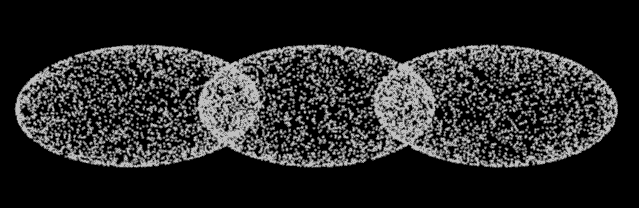} &
\includegraphics[scale=0.3]{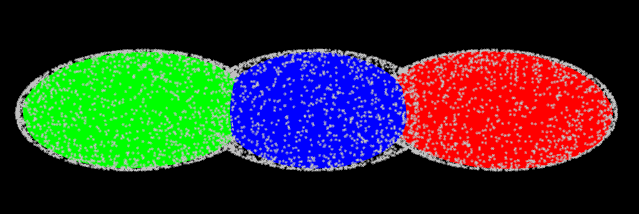}
\end{tabular}
\caption{A sample of size $n=10000$ and the estimated model.}
\label{simul_melange3_datares}
\end{figure}

\begin{figure}[h!]
\centering 
\begin{tabular}{ccc}
\includegraphics[scale=0.3]{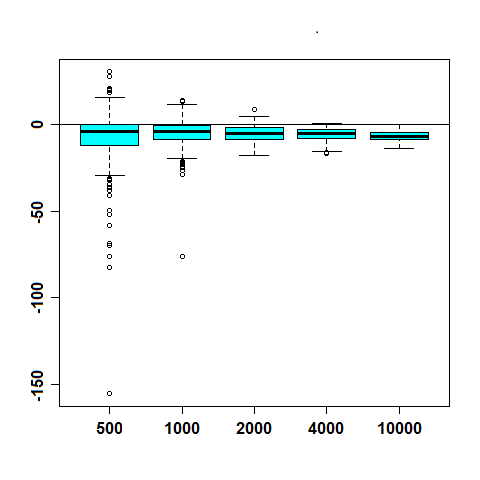} &
\includegraphics[scale=0.3]{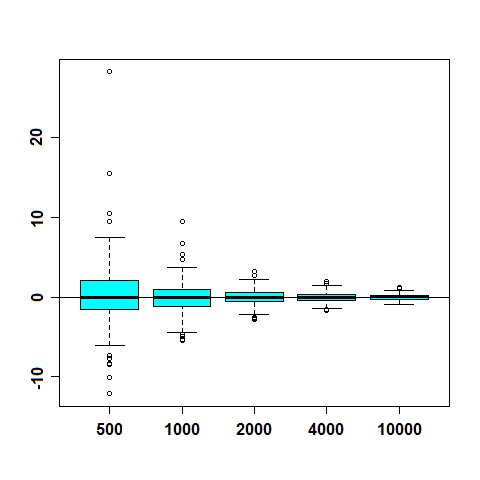} &
\includegraphics[scale=0.3]{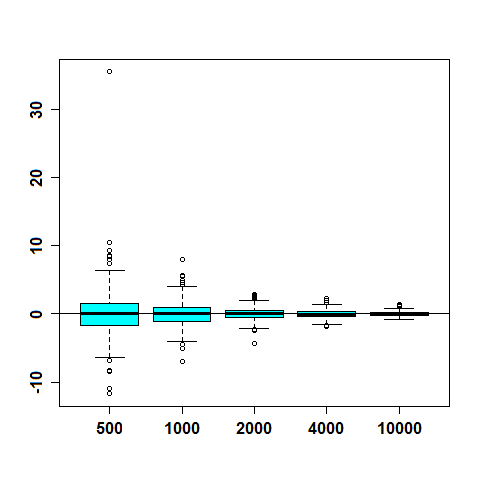} \\
\includegraphics[scale=0.23]{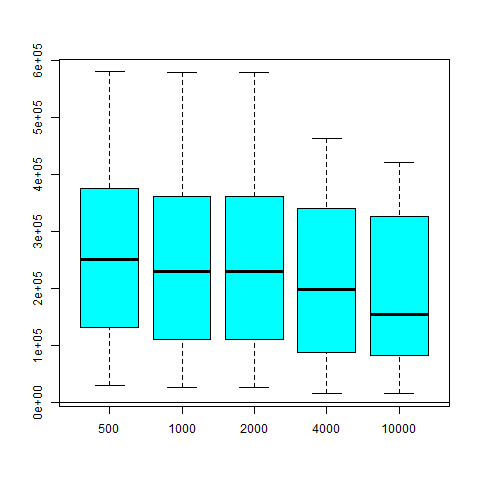} &
 &
 
\end{tabular}
\caption{Estimates of $E_1$ parameters. On the first row, boxplot of estimates of $\mu_x$, $\mu_y$ and $\mu_z$. 
On the second row, estimates of $E(\Sigma)$. 
}
\label{simul_melange_3_1}
\end{figure}

\begin{figure}[h!]
\centering 
\begin{tabular}{ccc}
\includegraphics[scale=0.3]{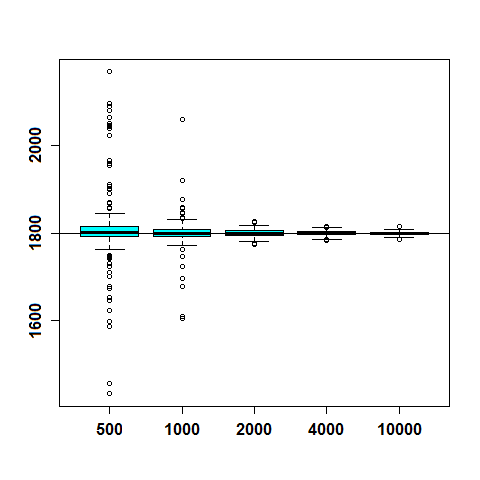} &
\includegraphics[scale=0.3]{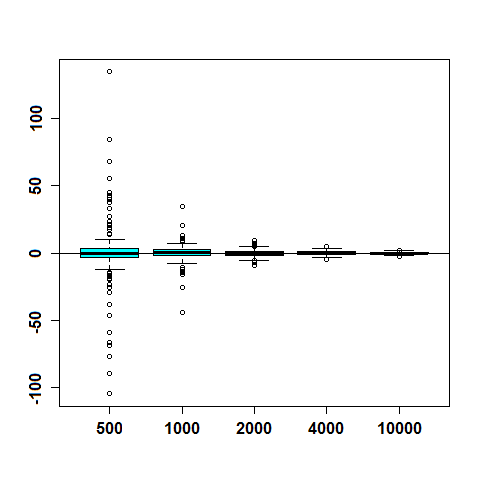} &
\includegraphics[scale=0.3]{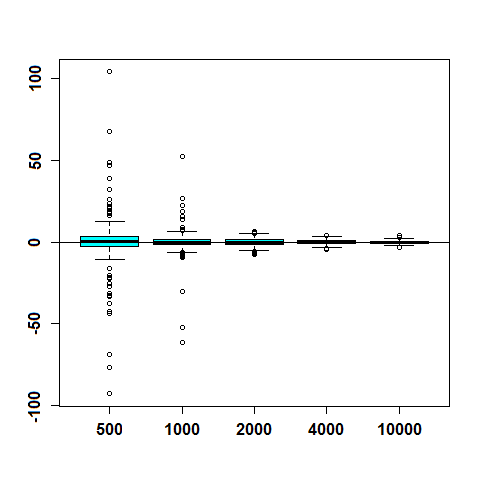} \\
\includegraphics[scale=0.23]{bp_mat_l2_1.png} &
 &
  
\end{tabular}
\caption{Estimates of $E_2$ parameters. 
On the first row, boxplot of estimates of $\mu_x$, $\mu_y$ and $\mu_z$. 
On the second row, estimates of $E(\Sigma)$. 
}
\label{simul_melange_3_2}
\end{figure}

\newpage 

\section{Applications}\label{section_applications}

In this section, we experiment our ellipsoidal mixture model on real data.
The objective is to model a person in 3D. 
Data are acquired with two calibrated depth sensor located on the opposite sides of a room. 
The two point clouds are fused thanks to a 3D registration technnique  \cite{Besl92}. 
The data are then segmented to keep only points belonging to the person.
We apply the proposed algorithm with $K=2$ and $K=3$ components in the mixture model. 
Results are given in Figure \ref{exp_person}. 
The model with $K=2$ components gives a better representation of the point cloud. 
The head is modelled by a nearly spherical shape while the body is modelled by a single ellipsoid. 
In the case of three components, the third ellipsoid is not meaningfull. 

\begin{figure}[h!]
\centering 
\begin{tabular}{ccc}
\includegraphics[scale=0.2]{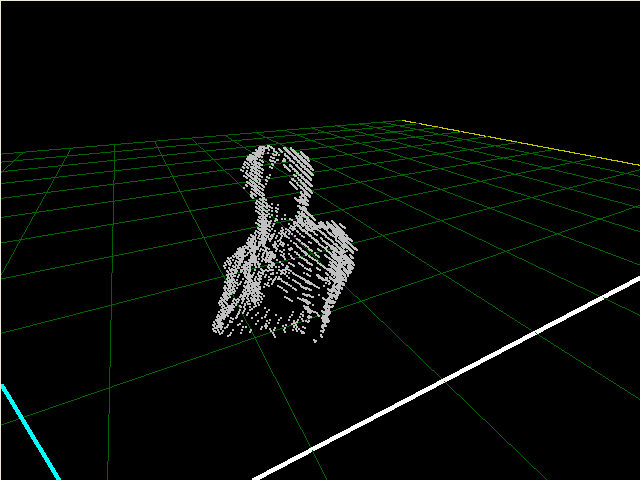} &
\includegraphics[scale=0.2]{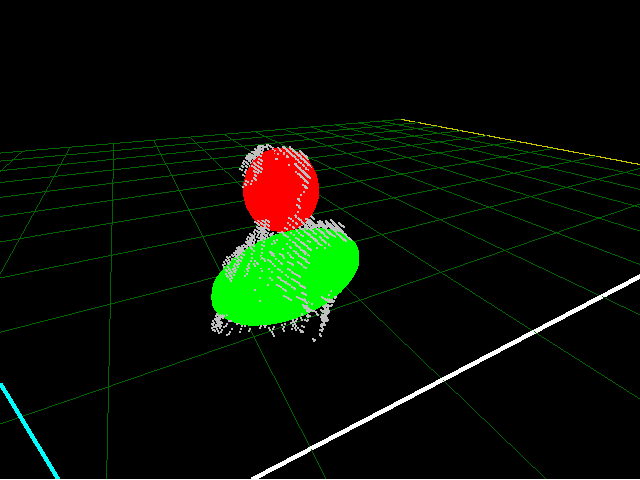} &
\includegraphics[scale=0.2]{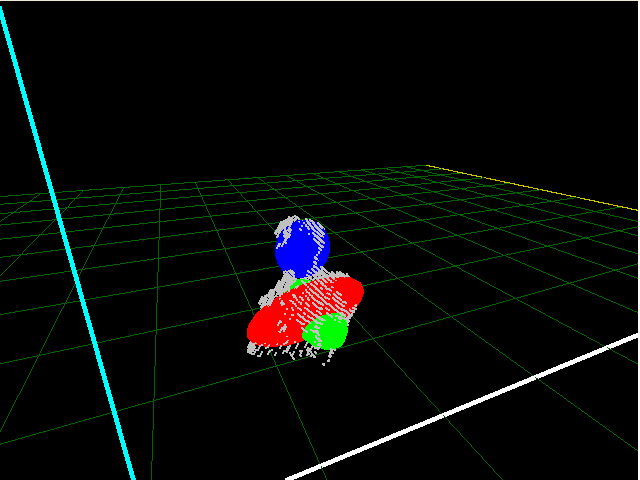} \\
\includegraphics[scale=0.2]{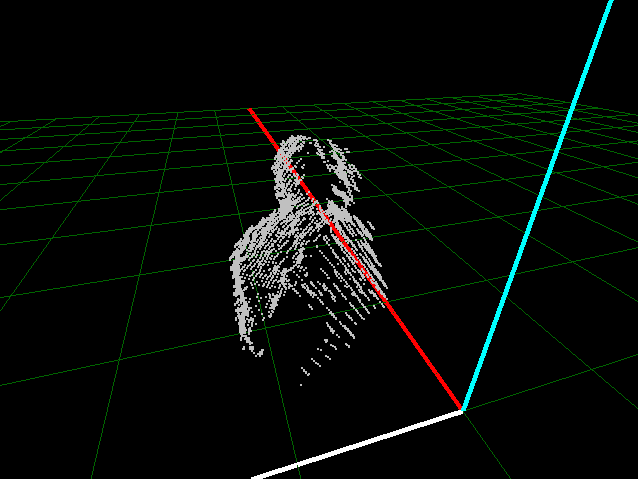} &
\includegraphics[scale=0.2]{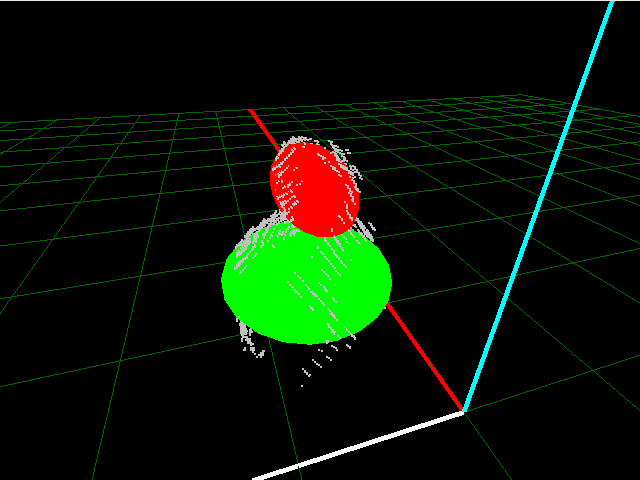} &
\includegraphics[scale=0.2]{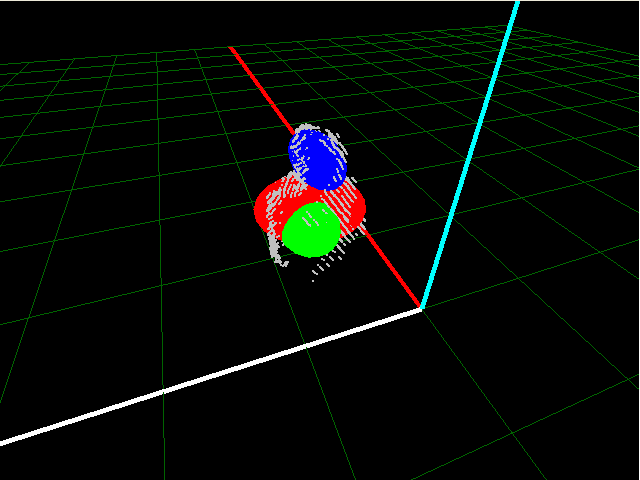} \\
(a) & (b) & (c) 
\end{tabular}
\caption{On the first column, 3D point clouds representing a standing person. On the second column, the estimated model with $K=2$ components, and on the third column, the estimated model with $K=3$ components. }
\label{exp_person}
\end{figure}

\section*{Conclusion}\label{section_conclusion}

In this work, we proposed a new ellipsoidal mixture model for 3D data modelling.  
The mixture model is based on a new pdf belonging to the family of elliptical distributions and that models points spread around an ellipsoidal surface. 
The parameters of the mixture model based on this density are estimated with an EM based algorithm. 
The convergence of the method has been established theoretically. 
Our algorithm is simple to implement and produces accurate estimations. 
We obtained results comparable to a state of the art ellipse fitting method. 

Experiments revealed that the analysis of the shape matrix $\Sigma$ could be a promising way to determine if a given set of points is a sphere or an ellipsoid. 
An adequate statistical test could therefore be set up to classify objects. 

\begin{appendix}

\section{Proofs and technical calculus}\label{sec::proofs}
\subsection{Proofs of Lemma \ref{thm_ellipsoid}  and \ref{tthm_Xetoile}}\label{sec::proofs::lemma}
\begin{proof}[Proof of Lemma \ref{thm_ellipsoid}]
The proof is straightforward using the decomposition $X = \mu + \Sigma^{1/2} \, W \, U$. 
The random vector $U$ is distributed on the $d$-dimensional unit sphere, which implies that  $\norm{U} = 1$, $\dE\cro{U} = 0$ and $\Var\cro{U} = d^{-1} \, I_d$. 
We therefore deduce that $\dE\cro{X} = \mu$ and 
$\Var\cro{X} = d^{-1} \, \dE\cro{W^2} \, \Sigma = d^{-1} \, \SigmaEt$. 
We can conclude the first part of the proof by remarking that for $q\geq 0$,\ 
$\dE\cro{W^q} = J_{d+q-1}J_{d-1}^{-1}$.

\noindent 
Then, to prove the second part, we set 
$$\xi = \sqrt{(X - \mu)^T\SigmaEt^{-1}(X - \mu)}, $$
and we can show that $\xi = (\dE\cro{W^2})^{-1/2} W$, leading to
$$\Var\pa{\dMah{X}{\mu}{\SigmaEt}}\ = \ 
\dE\pa{\xi^2} - \pa{\dE\cro{\xi}}^2 
 =\ 1 -  \dfrac{(\dE\cro{W})^2}{\dE\cro{W^2}},$$
which closes the proof of the lemma. 
\end{proof}

\begin{proof}[Proof of Lemma \ref{tthm_Xetoile}]
The proof is based on the decomposition $X = \mu + \Sigma^{1/2} \, W \, U$. 
The random vector $X^\ast$ can therefore be rewritten as $X^\ast = \mu + \Sigma^{1/2} \, U \, \pa{W - 1}$. 
Then, since random variables $W$ an $U$ are independant with $\dE\cro{U} = 0$, we deduce that that $\dE\cro{X^\ast} = \mu$.

\noindent 
This result enables us to compute the variance: $\Var\cro{X^\ast} = \dE\cro{\pa{X^\ast - \mu} \pa{X^\ast - \mu}^T}$. 
Given that variables $W$ and $U$ are independent with $\Var\cro{U} = d^{-1} \, I_d$ and by  remarking that $\dE\cro{W^q} = J_{d+q-1} \, J_{d-1}^{-1}$, we obtain $\Var\cro{X^\ast}= d^{-1} \, \dE\cro{(W - \sqrt{\dE\cro{W^2}})^2} \, \Sigma$. 
\end{proof}
\subsection{Proof of Theorem \ref{thm_conv_direct}}\label{appendixB_ell}
\begin{proof}[Proof of Theorem \ref{thm_conv_direct}]
Since $\dE[\norm{X_1}^4] < \infty$, using the law of the iterated logarithm of Hartman and Winters (see \cite{Stout}), we straight forwardly obtain  the first two results. 
We now focus on the convergence properties of the estimator $\widehat\sigma^2_n$ presented in Section \ref{ssection_directestim}.
For this purpose, we first establish the convergence of $\Sigmachap^{-1}$ towards  $\SigmaS^{-1}$. 
The proof is standard and we resume the main steps of the proof proposed by N'Guyen and  Saracco in \cite{Nguyen10}.

From the Riccati equation for matrix inversion given for example in \cite{DUF1997} (page 96), we can write the following decomposition 
\[
\Sigmachap^{-1}\ =\ \SigmaS^{-1} -
\SigmaS^{-1}\pa{\Sigmachap - \SigmaS}\SigmaS^{-1}
\ +\ R_n, 
\]
where $R_n$ is given by 
\[
R_n \ =\ \SigmaS^{-1}\pa{\SigmaS - \Sigmachap}\Sigmachap^{-1}
\pa{\SigmaS - \Sigmachap}\SigmaS^{-1}.
\]
Then, we immediately deduce that 
\[
\norm{\Sigmachap^{-1} - \SigmaS^{-1}}  \leq  
\lambda^2_{\max}(\SigmaS^{-1}) \norm{\Sigmachap - \SigmaS}\ +\ \norm{R_n}
\]
and 
\[
\norm{R_n}   \leq   \lambda_{\max} \, (\Sigmachap^{-1})\,
\lambda^2_{\max}(\SigmaS^{-1}) \norm{\SigmaS - \Sigmachap}^2.
\]
Let us recall that for any square matrix $A$ positive definite, we have $\lambda_{\max}(A^{-1}) = \left( \lambda_{\min}(A)\right)^{-1}$.
Consequently, since $\lambda_{\min}(\Sigma) > 0$ then $\lambda_{\min}(\SigmaS) > 0$. 
In addition, since $\Sigmachap$ converges almost surely towards $\SigmaS$, then  $\lambda_{\min}(\Sigmachap) > 0$.
We therefore have  
\begin{eqnarray}
\norm{R_n} & = & O\pa{\norm{\SigmaS - \Sigmachap}^2}\quad\mbox{a.s.}
\end{eqnarray}
Thanks to the upper bound $\norm{\Sigmachap - \SigmaS} = O(a_n)$, with $a_n = \sqrt{(\log\log n)/n}$, established in Section \ref{ssection_directestim}, we can write 
\begin{eqnarray}\label{MajInvSigmachap}
\norm{\Sigmachap^{-1} - \SigmaS^{-1}} & = & O\pa{a_n}\quad\mbox{a.s.}
\end{eqnarray}

Let us now study the convergence of $\widehat\sigma^2_n$ that can be rewritten under the form 
$A_n - B_n^2$ with 
$$A_n = \dfrac{1}{n}\sum_{j=1}^{n}\widehat\xi_j^2
\quad \quad\mbox{ and }\quad \quad
B_n = \dfrac{1}{n}\sum_{j=1}^{n}\widehat\xi_j, $$
where we recall that 
$\widehat\xi_j = \sqrt{(X_j - \Xbarn)^T\Sigmachap^{-1} (X_j - \Xbarn)}$.
In order to study the convergence of $A_n$, it is rewritten under the form 
\begin{eqnarray} \label{eqn_a_n}
A_n  & = & A_{1,n}\ + \ A_{2,n}
\ -\  (\Xbarn - \mu)^T \SigmaS^{-1} (\Xbarn - \mu) 
\end{eqnarray}
with 
\begin{align*}
A_{1,n} & =  \dfrac{1}{n}\sum_{j=1}^{n} (X_j - \Xbarn)^T
\pa{\Sigmachap^{-1} - \SigmaS^{-1}}(X_j - \Xbarn)\noindent\\
A_{2,n} & =  \dfrac{1}{n}\sum_{j=1}^{n} (X_j - \mu)^T\SigmaS^{-1} (X_j - \mu).
\end{align*}
We can easily show that 
$\sum_{j=1}^{n} \norm{X_j - \Xbarn}^2 = O(n)$ a.s.,
and by using \eqref{MajInvSigmachap}, we obtain 
$\abs{A_{1,n}} = O(a_n)$ a.s. 
On the other hand, we have 
\[
A_{2,n} = \dfrac{1}{n}\sum_{j=1}^{n} \dfrac{J_2(\sigma)}{J_4(\sigma)} W_j^2  \tendps  1
\]
and since the $(W_j)$ have a finite moment of order 4, the law of the iterated logarithm of Hartman and Winters \cite{Stout} can be applied, leading to $\abs{A_{2,n} - 1} = O(a_n)$ a.s.

Finally, as $\norm{\Xbarn - \mu} = O(a_n)$ a.s., we get from \eqref{eqn_a_n} the almost sure convergence of $A_n$ towards 1 and the upper bound  

\begin{equation}\label{MajAn}
\abs{A_n - 1} = O(a_n)\quad\mbox{a.s.}
\end{equation}
Let us now study the convergence of $B_n$. 
To this aim, $B_n$ is rewritten under the form 
\begin{equation}\label{DecompositionBn}
B_n  = B_{1,n}\ + \ B_{2,n}\ + \ B_{3,n}
\end{equation}
with 
\begin{align*}
B_{1,n}  = & \dfrac{1}{n}\sum_{j=1}^{n} \xi_j\\
B_{2,n}  = & \dfrac{1}{n}\sum_{j=1}^{n} 
\pa{\sqrt{(X_j  - \mu)^T\Sigmachap^{-1} (X_j - \mu)} - \xi_j} \\
B_{3,n}  = & \dfrac{1}{n}\sum_{j=1}^{n} \pa{\widehat\xi_j - \sqrt{(X_j - \mu)^T\Sigmachap^{-1} (X_j - \mu)}} 
\end{align*}
where we recall that 
$\xi_j = \sqrt{(X_j - \mu)^T\SigmaS^{-1} (X_j - \mu)}$.
We immediately deduce that for any $1\leq j\leq n$,
\[
(\lambda_{\min}(\SigmaS^{-1}))^{1/2} \norm{X_j - \mu}
\leq \xi_j \leq
(\lambda_{\max}(\SigmaS^{-1}))^{1/2} \norm{X_j - \mu}.
\]
In addition, by remarking that $\xi_j$ can be written 
$\xi_j = \pa{\dE \left[W^2\right]}^{-1/2}W_j$, we deduce that  
\begin{eqnarray}\label{MajB1n}
B_{1,n} \ \tendps\ 
\dfrac{\dE[W]}{(\dE \left[ W^2 \right])^{1/2}}
& \mbox{and} & 
\abs{B_{1,n} - \dfrac{\dE[W]}{\left(\dE \left[W^2\right]\right)^{1/2}}}
= O\pa{a_n}\ \mbox{a.s.}
\end{eqnarray}
By using the equality $\sqrt{a} - \sqrt{b} = (a-b) (\sqrt{a} + \sqrt{b})^{-1}$,
we easily obtain 
\begin{align*}
\abs{B_{2,n}} & \leq  \dfrac{1}{n}\sum_{j=1}^{n} 
\dfrac{\abs{(X_j  - \mu)^T\Sigmachap^{-1} (X_j - \mu)  - \xi_j^2}}
{\sqrt{(X_j  - \mu)^T\Sigmachap^{-1} (X_j - \mu)} + \xi_j} \\
& \leq  \dfrac{1}{n}\sum_{j=1}^{n} 
\dfrac{\abs{(X_j  - \mu)^T\pa{\Sigmachap^{-1}-\SigmaS^{-1}}(X_j - \mu)}}{\xi_j}.
 \nonumber
\end{align*}
Since $\xi_j \geq (\lambda_{\min}(\SigmaS^{-1}))^{1/2} \norm{X_j - \mu}$, we deduce that 
\[
\abs{B_{2,n}}  \leq   
(\lambda_{\max}(\SigmaS))^{1/2} \norm{\Sigmachap^{-1}-\SigmaS^{-1}}
\dfrac{1}{n}\sum_{j=1}^{n} \norm{X_j - \mu}.
\]
Finally, as  
$\sum_{j=1}^{n} \norm{X_j - \mu} = O(n)$ a.s., we have  
\begin{equation}\label{MajB2n}
\abs{B_{2,n}}= O\pa{a_n}\quad\mbox{a.s.}
\end{equation}
Thanks to the triangular inequality (derived from the Mahanobis distance), we get 
\[
\abs{B_{3,n}} \leq \dfrac{1}{n}\sum_{j=1}^{n} 
\sqrt{(\Xbarn - \mu)^T\Sigmachap^{-1} (\Xbarn - \mu)} 
 \leq  \sqrt{\lambda_{\max}(\Sigmachap^{-1})} \norm{\Xbarn - \mu}.
\]
Finally, since on the one hand $\norm{\Xbarn - \mu} = O\pa{a_n}$ a.s.
and on the other hand, $\Sigmachap^{-1}$ convergerges almost surely towards $\SigmaS^{-1}$
with $\lambda_{\min}(\SigmaS) > 0$, we deduce that  
\begin{eqnarray}\label{MajB3n}
\abs{B_{3,n}} & = & O\pa{a_n}\quad\mbox{a.s.}
\end{eqnarray}
By combining \eqref{MajB1n}, \eqref{MajB2n} and \eqref{MajB3n} 
with \eqref{DecompositionBn}, we get 
\begin{eqnarray}\label{MajBn}
B_n \ \tendps\ 
\dfrac{\dE[W]}{(\dE [W^2])^{1/2}}
& \mbox{and} & 
\abs{B_n - \dfrac{\dE[W]}{(\dE [W^2])^{1/2}}}
= O\pa{a_n}\ \mbox{a.s.}
\end{eqnarray}
The result on the convergence of $\widehat\sigma_n^2 = A_n - B_n^2$ is obtained by combining the results \eqref{MajAn} and \eqref{MajBn}.
Indeed, we first obtain  
\[
\widehat\sigma_n^2 \tendps
1 - \dfrac{(\dE(W))^2}{\dE(W^2)} = 1 - \dfrac{J_3^2}{J_2J_4}
= \widetilde\sigma^2 .
\]
Then, by considering the decomposition 
\[
\widehat\sigma_n^2 - \widetilde\sigma^2 = \pa{A_n - 1} - \pa{B_n - \gamma}\pa{B_n + \gamma}, 
\]
where we set $\gamma = (\dE [W^2])^{-1/2} \dE[W]$,
and by using the upper bound $\abs{B_n + \gamma} = O(1)$ a.s., 
we deduce that 
\[
\abs{\widehat\sigma_n^2 - \widetilde\sigma^2} = O\pa{a_n}\ \mbox{a.s.}
\]
which closes the proof of the convergence of $\widehat\sigma_n^2$. 
\end{proof}

\section{Some important results on the normalization constant of the ellipsoidal density}
\label{appendixA_ell}

This appendix deals with the computation of the normalization constant $C_d$ of the new ellipsoidal density $f$ introduced in \eqref{def_densite}. 
We first establish the following technical lemma which is usefull to simplify computations. 

\def\drho{{\mbox{\rm d}\rho}}
\def\dt{{\mbox{\rm d}t}}
\begin{lem}\label{LemCalcul_J}
Let us set for $q \in \dN$ and $\alpha> 0$, 
\begin{eqnarray}\label{Def_Jd}
J_q(\alpha)\ =\ \int_{0}^{\infty} t^{q} \exp\pa{-(t-1)^2 / \pa{2\alpha^2}} \dt.
\end{eqnarray}
Then $J_0(\alpha) = \alpha\sqrt{2\pi}\pa{1 - \Phi(-1 / \alpha)}$,
$J_1(\alpha) = J_0(\alpha) + \alpha^2 \exp\pa{-1 / \pa{2\alpha^2}}$,
and for $q \geq 2$,
\begin{eqnarray}\label{Val_Jd}
J_q(\alpha)\ =\  J_{q-1}(\alpha) +  (q-1) \alpha^2  J_{q-2}(\alpha)
\end{eqnarray}

\noindent 
where $\Phi$ denotes the gaussian cumulative distribution function. 
\end{lem}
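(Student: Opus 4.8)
The plan is to compute $J_q(\alpha)$ via elementary manipulations of the defining integral \eqref{Def_Jd}, treating the base cases $q=0,1$ first and then establishing the recursion \eqref{Val_Jd} for $q\ge 2$ by an integration by parts.

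First I would handle $J_0(\alpha)$. By the substitution $u = (t-1)/\alpha$ (so $t = 1 + \alpha u$, $\dt = \alpha\,\du$), the lower limit $t=0$ becomes $u = -1/\alpha$ and one gets
\[
J_0(\alpha) = \alpha \int_{-1/\alpha}^{\infty} e^{-u^2/2}\,\du = \alpha\sqrt{2\pi}\,\bigl(1 - \Phi(-1/\alpha)\bigr),
\]
using $\int_{\R} e^{-u^2/2}\,\du = \sqrt{2\pi}$ and the definition of the Gaussian cdf $\Phi$. For $J_1(\alpha)$ I would write $t = (t-1) + 1$ inside the integrand, so that
\[
J_1(\alpha) = \int_0^\infty (t-1) e^{-(t-1)^2/(2\alpha^2)}\,\dt + \int_0^\infty e^{-(t-1)^2/(2\alpha^2)}\,\dt.
\]
The second integral is exactly $J_0(\alpha)$. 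In the first, the integrand is an exact derivative: $(t-1)e^{-(t-1)^2/(2\alpha^2)} = -\alpha^2 \frac{d}{dt}\, e^{-(t-1)^2/(2\alpha^2)}$, so it evaluates to $\bigl[-\alpha^2 e^{-(t-1)^2/(2\alpha^2)}\bigr]_0^\infty = \alpha^2 e^{-1/(2\alpha^2)}$. This gives $J_1(\alpha) = J_0(\alpha) + \alpha^2 e^{-1/(2\alpha^2)}$.

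For the recursion with $q\ge 2$, I would again split $t^q = t^{q-1}(t-1) + t^{q-1}$, so that $J_q(\alpha) = \int_0^\infty t^{q-1}(t-1)e^{-(t-1)^2/(2\alpha^2)}\,\dt + J_{q-1}(\alpha)$. In the remaining integral I integrate by parts, taking $u = t^{q-1}$ and $dv = (t-1)e^{-(t-1)^2/(2\alpha^2)}\,\dt$, hence $v = -\alpha^2 e^{-(t-1)^2/(2\alpha^2)}$. The boundary term $\bigl[-\alpha^2 t^{q-1} e^{-(t-1)^2/(2\alpha^2)}\bigr]_0^\infty$ vanishes at both endpoints (the Gaussian decay kills the $t=\infty$ end, and $t^{q-1}=0$ at $t=0$ since $q\ge 2$), leaving $\alpha^2 (q-1)\int_0^\infty t^{q-2} e^{-(t-1)^2/(2\alpha^2)}\,\dt = (q-1)\alpha^2 J_{q-2}(\alpha)$. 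Combining yields \eqref{Val_Jd}.

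The computation is entirely routine; the only point requiring a little care — and thus the main (minor) obstacle — is verifying that all boundary terms vanish, i.e. checking integrability and the behaviour at $t\to\infty$ and $t\to 0$, which is immediate from the Gaussian factor and from $q\ge 2$ in the recursion step. I would state these convergence checks briefly and then assemble the three formulas.
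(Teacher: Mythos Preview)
Your proof is correct and follows essentially the same route as the paper's: both split $t^q = t^{q-1}(t-1) + t^{q-1}$ and integrate the first piece by parts against the Gaussian factor, with the base cases handled directly. If anything, you are slightly more explicit about why the boundary term vanishes (using $q\ge 2$ at $t=0$), which the paper leaves implicit.
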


\begin{proof}
Let us set $v(t) = -(t-1)^2 / \pa{2\alpha^2}$.
Then, $v^\prime(t) = - (t-1) / \alpha^2$ and we can rewrite 
\eqref{Def_Jd} under the form
\begin{eqnarray}
\notag J_q(\alpha) & = & \int_{0}^{\infty} \pa{t^{q} - t^{q-1}} 
\exp\pa{v(t)}\dt + J_{d-1}(\alpha)\\
& = & -\alpha^2 \int_{0}^{\infty} t^{q-1} v^\prime(t) \exp\pa{v(t)}\dt 
+ J_{q-1}(\alpha)\label{Calcul_Jd}
\end{eqnarray}

Finally, \eqref{Val_Jd} immediately follows using an integration by parts.
To close the proof, it is clear that $J_0(\alpha) = \alpha\sqrt{2\pi}\pa{1 - \Phi(-1 / \alpha)}$ and we get
 $J_1(\alpha)$ using \eqref{Calcul_Jd} with $d=1$. 
\end{proof}

Let us now consider the constant $C_d$ which satisfies 
\[
I = C_d \int_{\mathbb{R}^d} \exp\pa{- \dfrac{1}{2 \sigma^2} \pa{\sqrt{\pa{x-\mu}^T \Sigma^{-1} \pa{x-\mu}} - 1}^2} dx = 1.
\]
The substitution $y := \Sigma^{-1/2}\pa{x - \mu}$ leads to 
\[
C_d \int_{\mathbb{R}^d} \det{\Sigma^{1/2}} \exp\pa{- \dfrac{1}{2 \sigma^2} \pa{\norm{y} - 1}^2} dy = 1.
\]
Then, a polar coordinates substitution leads to   
\begin{align*}
I & =  C_d \, \det{\Sigma}^{1/2} \, \dfrac{2 \pi^{d/2}}{\Gamma\pa{d/2}} \int_0^{+\infty} \rho^{d-1} \exp\pa{- \dfrac{\pa{\rho - 1}^2}{2 \sigma^2}} d\rho \\
& =  C_d \, \det{\Sigma}^{1/2} \, \dfrac{2 \pi^{d/2}}{\Gamma\pa{d/2}} J_{d-1}\pa{\sigma} = 1.
\end{align*}
The expression of the normalization constant is 
\[
C_d = \dfrac{\Gamma\pa{d/2}}{2 \pi^{d/2} \, \mid \Sigma \mid^{1/2} J_{d-1}\pa{\sigma}}.
\]
For the first values of $d$, the expressions of the constant are given by 
\begin{align*}
C_1 & =  \dfrac{1}{2 \sigma \sqrt{2 \pi} \det{\Sigma}^{1/2} \cro{1 - \Phi\pa{-1 / \sigma}}} \\
C_2 & =  \dfrac{1}{(2 \pi)^{3/2} \det{\Sigma}^{1/2} \sigma\cro{1 - \Phi\pa{-1 / \sigma}}} \\
C_3 & =  \dfrac{1}{2 (2 \pi)^{3/2} \det{\Sigma}^{1/2} \sigma (1 + \sigma^2) \cro{1 - \Phi\pa{-1/\sigma} + \dfrac{\sigma \exp\pa{- 1 / (2 \sigma^2)}}{(1 + \sigma^2) \sqrt{2 \pi}}}} 
\end{align*}
where $\Phi$ denotes the gaussian cumulative distribution function.

We observe that the parameter $\sigma$ plays a particular role in these expressions. 
Indeed, under the assumption that the parameter $\sigma$ is small enough, terms $\Phi(-1/\sigma)$ and $\frac{t}{\sqrt{2 \pi} (1 + \sigma^2)} e^{(-1 / (2 \sigma^2))}$ can be ignored (cf Appendix \ref{appendixA_bis_ell}). 
In that case,  we can approximate the normalization constant and thus obtain 
\begin{align*}
C_1 & =  \dfrac{1}{2 \sigma \sqrt{2 \pi} \det{\Sigma}^{1/2}} \\
C_2 & =  \dfrac{1}{(2 \pi)^{3/2} \det{\Sigma}^{1/2} \sigma} \\
C_3 & =  \dfrac{1}{2 (2 \pi)^{3/2} \, \det{\Sigma }^{1/2} \sigma (1 + \sigma^2)}.
\end{align*}

\section{Influence of parameter $\sigma^2$}
\label{appendixA_bis_ell}

This appendix concerns the influence of parameter $\sigma^2$ on several quantities.

Figure \ref{courbes_approx} represents the values of terms  $\Phi(-1/\sigma)$ and $\frac{t}{\sqrt{2 \pi} (1 + \sigma^2)} e^{(-1 / (2 \sigma^2))}$ with respect to $\sigma$. 
We observe that these quantities are close to zero for small values of $\sigma$.

\begin{figure}[h!]
\centering 
\includegraphics[scale=0.5]{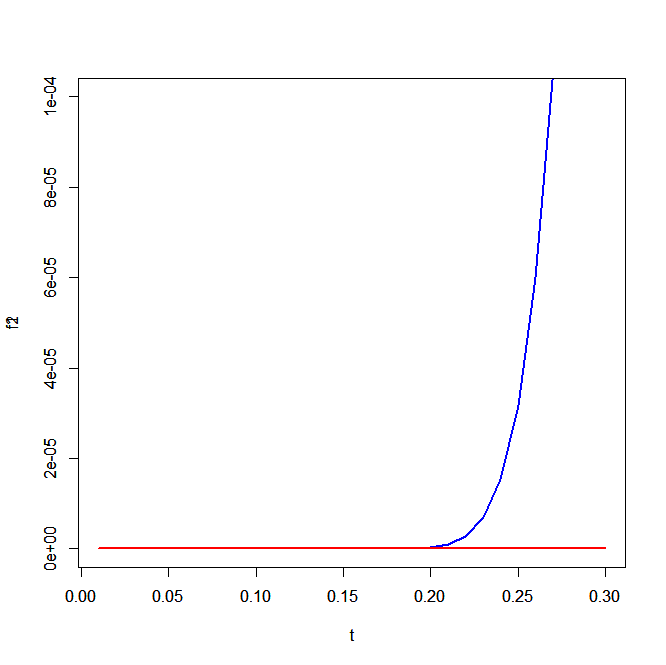}
\caption{Values of terms $\Phi(-1/\sigma)$ (blue) and $\frac{t}{\sqrt{2 \pi} (1 + \sigma^2)} e^{(-1 / (2 \sigma^2))}$ (red) with respect to $\sigma$.}
\label{courbes_approx} 
\end{figure}

Figure \ref{courbes_approx_s} shows the values of terms $\dfrac{1 + 6 \sigma^2 + 3 \sigma^4}{1 + \sigma^2}$, $\dfrac{1 + 3 \sigma^4}{1 + 7 \sigma^2 + 9 \sigma^4 + 3 \sigma^6}$ and $\dfrac{1 + 3 \sigma^2}{1 + \sigma^2}$ with respect to $\sigma$. 
We can therefore quantify the difference between the given terms and the value $1$. 

\begin{figure}[h!]
\centering 
\includegraphics[scale=0.25]{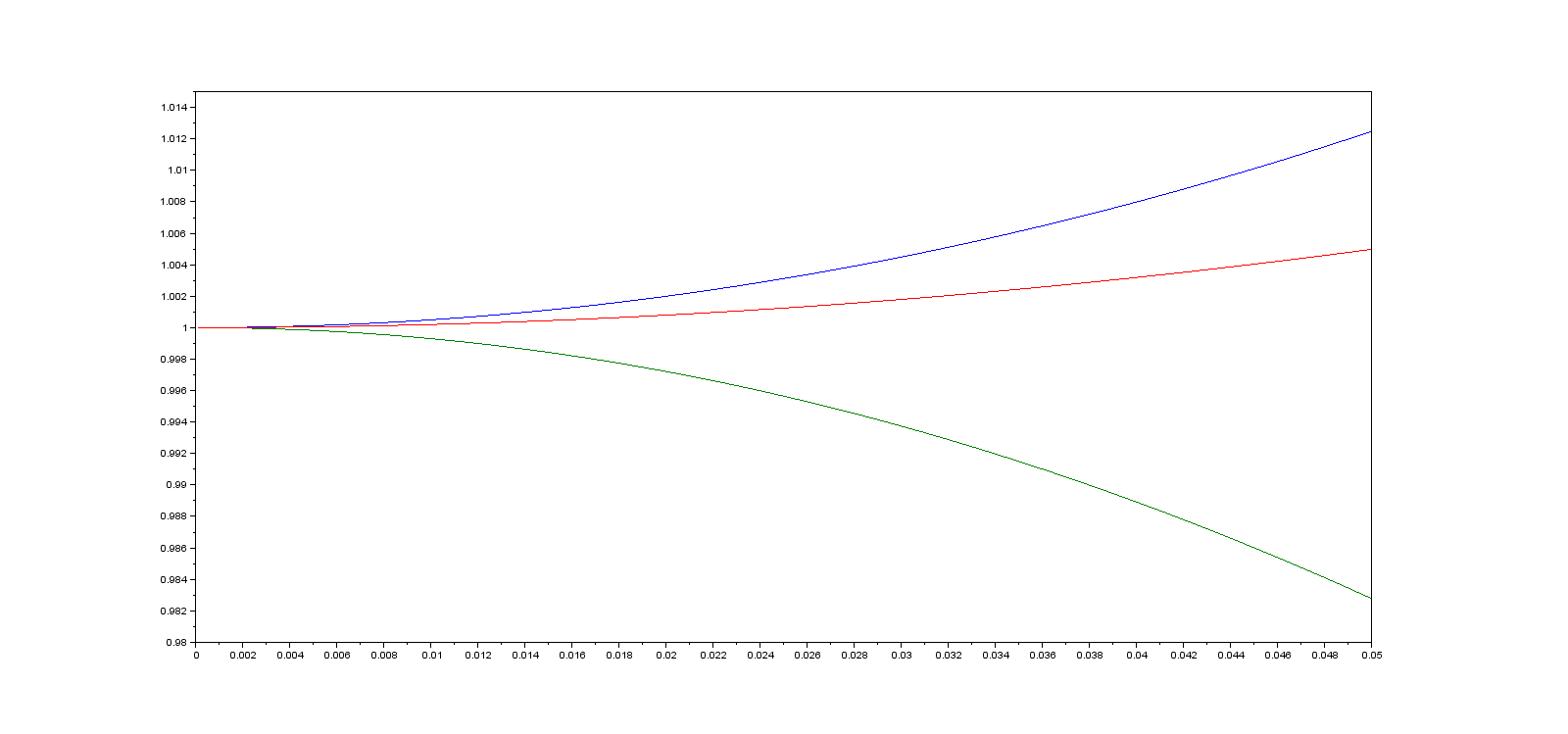}
\caption{Values of terms $\dfrac{1 + 6 \sigma^2 + 3 \sigma^4}{1 + \sigma^2}$ (blue), $\dfrac{1 + 3 \sigma^4}{1 + 7 \sigma^2 + 9 \sigma^4 + 3 \sigma^6}$ (green) and  $\dfrac{1 + 3 \sigma^2}{1 + \sigma^2}$ (red) with respect to $\sigma$.}
\label{courbes_approx_s} 
\end{figure}

\section{Justification of the M-step in the EM algorithm}
\label{appendixC_ell}

In this appendix, we justify the choice of the estimators used in the M-step of the EM algorithm described in Section \ref{ssection_estimmm}. 
To maximize the conditional expectation of the completed log-likelihood of parameters $\Theta = \pa{\pi_1, \ldots, \pi_K, \theta_1, \ldots, \theta_K}$ with $\theta_k = \pa{\mu_k, \Sigma_k, \sigma_k}$, we compute partial derivatives with respect to each parameter of the quantity   
\[
Q\pa{\Theta} = \sum_{i=1}^n \sum_{k=1}^K t_{ik} \pa{\log\pa{\pi_k} + \log\pa{f(x_i, \theta_k}}, 
\]
where 
$$f(x \mid \theta_k) = \dfrac{(2 \pi)^{-3/2} \, \det{\Sigma_k}^{-1/2}}{2 \, \sigma_k (1 + \sigma_k^2)} \, \exp\pa{- \dfrac{1}{2 \sigma_k^2} \pa{\dMah{x}{\mu_k}{\Sigma_k} - 1}^2}$$
and with $\sigma$ sufficiently small. The partial derivative of $Q$ with respect to $\mu_k$ is   
\[
\dfrac{\partial Q}{\partial \mu_k} = \dfrac{\Sigma_k^{-1}}{\sigma_k^2} \, \sum_{i=1}^n t_{ik} \, (x_i - \mu_k) \cro{1 - \dfrac{1}{\dMah{x_i}{\mu_k}{\Sigma_k}}}.
\]
The partial derivative of $Q$ with respect to $\Sigma_k^{-1}$ is   
\[
\dfrac{\partial Q}{\partial \Sigma_k^{-1}} = \sigma^2 \, \Sigma_k - \dfrac{1}{n} \, \sum_{i=1}^n t_{ik} \, (x_i - \mu_k)(x_i - \mu_k)^T \cro{1 - \dfrac{1}{\dMah{x_i}{\mu_k}{\Sigma_k}}}.
\]
The partial derivative of $Q$ with respect to $\sigma_k^2$ is   
\begin{equation}\label{deriv_sigma}
\dfrac{\partial Q}{\partial \sigma_k} = - n \, \dfrac{(1 + 3 \, \sigma_k^2)}{\sigma_k (1 + \sigma_k^2)} + \dfrac{1}{\sigma_k^3} \, \sum_{i=1}^n  t_{ik} \, \pa{\dMah{x}{\mu_k}{\Sigma_k} - 1}^2.
\end{equation} 
The maximization problem cannot be solved analytically. 
Under the assumption that $\sigma_k$ is small enough, we can write $(1 + 3 \, \sigma_k^2) \,  / \, (1 + \sigma_k^2) \simeq 1$ and the partial derivative \eqref{deriv_sigma} can be simplified. 
However, for shake of simplicity, we prefer estimate the shape matrix $\Sigma_k$ by the usual estimator $n^{-1} \, \sum_{i=1}^n (x_i - \mu_k)(x_i - \mu_k)^T$. 
This choice is also motivated by our experiments.

We can easily show that the mixture weights $\left\lbrace w_{ik} \right\rbrace_{i=1..n}^{k=1..K}$ can be estimated using the classical estimators employed for gaussian mixture models.  
Finally, these computations lead to the estimators presented in Section \ref{ssection_estimmm}. 

\end{appendix}


\begin{thebibliography}{}

\end{thebibliography}


\begin{thebibliography}{00}

\bibitem[Ahn et~al. (2002)]{Ahn02} 
Sung Joon Ahn, Wolfgang Rauh, Hyung Suck Cho, and H-J Warnecke Orthogonal
distance fitting of implicit curves and surfaces. 
\textit{IEEE Transactions on Pattern Analysis
and Machine Intelligence}, 24(5):620–638, 2002. 


\bibitem[Besl (1992)]{Besl92} 
Paul J. Besl. A Method for Registration of 3-D Shape.
\textit{IEEE Trans. on Pattern Analysis and Machine Intelligence}, 
Los Alamitos, CA, USA, vol. 14, no 2, 1992, p. 239–256.

\bibitem[Bishop (2007]{Bishop07} 
Christopher M. Bishop. \textit{Pattern Recognition and Machine Learning (Information Science and Statistics)}, 
Springer. 

\bibitem[Brazey et~al.(2014)]{Brazey14}
D. Brazey and B. Portier (2014). A new spherical mixture model for head detection in depth images. 
\textit{SIAM Journal on Imaging Science}, 
Vol. 7, No. 4, pp. 2423-2447.


\bibitem[Burt et~al. (2019)]{Burt19} 
Burt, A, Disney, M, Calders, K. Extracting individual trees from lidar point clouds using treeseg.
\textit{Methods Ecol Evol}, 10: 438– 445, 2019.



\bibitem[Dempster et~al.(1977)]{Dempster77} 
A. P. Dempster, N. M. Laird and D. B. Rubin. Maximum Likelihood from Incomplete Data via the EM Algorithm. 
\textit{Journal of the Royal Statistical Society}, 
Vol. 39, pp. 1-38, 1977.

\bibitem[Du et~al. (2018)]{Du18} 
X. Du, M. H. Ang, S. Karaman and D. Rus. A General Pipeline for 3D Detection of Vehicles.
\textit{ IEEE International Conference on Robotics and Automation (ICRA)},  pp. 3194-3200, 2018. 



\bibitem[Duflo et~al.(1997)]{DUF1997} 
M. Duflo. \textit{Random Iterative Models}, 
Springer Verlag, Berlin.

\bibitem[Fischler et~al.(1981)]{Fischler81} 
M. A. Fischler and R. C. Bolles. Random sample consensus : a paradigm for model fitting with applications to image analysis. 
\textit{Communications of the ACM}, vol. 24, pp. 381-395. 

\bibitem[Fitzgibbon et~al.(1995)]{Fitz95} 
Andrew W. Fitzgibbon and R.B.Fisher. A Buyer’s Guide to Conic Fitting. 
\textit{Proc.5th British Machine Vision Conference}, Birmingham, pp. 513-522. 

\bibitem[Greggio et~al.(2012)]{Greggio12} 
N. Greggio, A. Bernardino, C. Laschi, P. Dario and J. Santos-Victor. Fast Estimation of Gaussian Mixture Models for Image Segmentation. 
\textit{Machine Vision and Applications}, Vol. 23, pp. 773-789.

\bibitem[Jian et~al.(2005)]{Jian05} 
B. Jian and B. C. Vemuri. A Robust Algorithm for Point Set Registration Using Mixture of Gaussians. 
\textit{International Conference on Computer Vision}, Vol. 2, pp. 1246-1251. 

\bibitem[Kesäniemi et~al. (2017)]{Kesaniemi17} 
Martti Kesäniemi and Kai Virtanen Direct least square fitting of hyperellipsoids. 
\textit{IEEE Transactions on Pattern Analysis and Machine Intelligence}, 40(1):63–76, 2017. 

\bibitem[Li et~al. (2004)]{Li04} 
Qingde Li and John G Griffiths Least squares ellipsoid specific fitting. 
\textit{Geometric
Modeling and Processing},  pages 335–340, 2004. 




\bibitem[Li et~al. (2019)]{Li19} 
Li, L., Sung, M., Dubrovina, A., Yi, L. and Guibas, L. J. Supervised fitting of geometric primitives to 3d point clouds. 
\textit{Proceedings of the IEEE/CVF Conference on Computer Vision and Pattern Recognition}, pp. 2652-2660, 2019. 

\bibitem[Li et~al. (2020)]{Li20} 
Li Y, Li W, Darwish W, Tang S, Hu Y, Chen W. Improving Plane Fitting Accuracy with Rigorous Error Models of Structured Light-Based RGB-D Sensors.
\textit{Remote Sensing}, 12(2):320, 2020.


\bibitem[Liu et~al.(2014)]{Liu14} 
J. Liu and Z. Wu. An adaptive approach for primitive shape extraction from point clouds. 
\textit{Optik - International Journal for Light and Electron Optics}, 125(9), pages 2000-2008.

\bibitem[Martins et~al.(2008)]{Martins08} 
D.A. Martins, A.J. Neves and A.J. Pinho. Real-time generic ball recognition in RoboCup domain.
\textit{Proc. of the 3rd International Workshop on Intelligent Robotics}, pages 37-48.

\bibitem[McLachlan et~al.(2000)]{McLachlan2000}  
G. J. McLachlan and D. Peel. Finite Mixture Models. 
\textit{Wiley Series in Probability and Statistics}.

\bibitem[McLachlan et~al.(2008)]{McLachlan2008} 
G. J. McLachlan and K. Thriyambakam. The EM Algorithm and Extensions. 
\textit{WILEY-Interscience}.  

\bibitem[Muirhead (1982)]{Muirhead1982} 
R. J. Muirhead. Aspects of multivariate statistical theory, Wiley. 

\bibitem[Nahangi et~al. (2019)]{Nahangi19} 
M. Nahangi, T. Czerniawski, Carl T. Haas, S. Walbridge. Pipe radius estimation using Kinect range cameras.
\textit{Automation in Construction}, vol. 99, pp. 197-205, 2019.


\bibitem[Nguyen et~al.(2010)]{Nguyen10} 
T.M.N. Nguyen and J. Saracco. Estimation r\'ecursive en r\'egression inverse par tranches (sliced inverse regression).  
\textit{Journal de la Soci\'et\'e Fran\c caise de Statistique}, vol. 151 pp. 19-46.

\bibitem[N\'{u}\~{n}ez et~al.(2009)]{Nunez09} 
P. N\'{u}\~{n}ez, P. Drews Jr, R. Rocha, M. Campos and J. Dias. Novelty detection and 3D shape retrieval based on gaussian mixture models for autonomous surveillance robotics,. 
\textit{International Conference on Intelligent Robots and Systems}, pp. 4724-4730. 

\bibitem[OpenCV ()]{OpenCV} 
OpenCV library. \textit{http://opencv.org/}. 

\bibitem[Parr et~al. (2022)]{Parr22} 
Parr B., Legg M. Alam, F. Analysis of Depth Cameras for Proximal Sensing of Grapes.
\textit{Sensors}, no. 11: 4179, 2022. 

\bibitem[Qi et~al. (2017)]{Qi17} 
Qi, C. R., Yi, L., Su, H., and Guibas, L. J. Pointnet++: Deep hierarchical feature learning on point sets in a metric space.
\textit{Advances in neural information processing systems}, vol. 30, 2017.


\bibitem[Rabbani (2006)]{Rabbani06} 
T. Rabbani. Automatic reconstruction of industrial installations using point clouds and images. 
\textit{NCG Nederlandse Commissie voor Geodesie Netherlands Geodetic Commission}. 


\bibitem[Romanengo et~al. (2022)]{Romanengo22} 
C. Romanengo, A. Raffo, Y. Qie, N. Anwer, B. Falcidieno. Fit4CAD: A point cloud benchmark for fitting simple geometric primitives in CAD objects.
\textit{Computers and Graphics}, vol. 102, pp 133-143, 2022. 


\bibitem[Shafiq et~al.(2001)]{Shafiq01} 
M.S. Shafiq, S.T. Tümer, S. Turgut and H.C. Güler. Marker detection and trajectory generation algorithms for a multicamera based gait analysis system,. 
\textit{Mechatronics}, 11(4), pages 409-437.


\bibitem[Sharma et~al. (2020)]{Sharma20} 
Sharma, G., Liu, D., Maji, S., Kalogerakis, E., Chaudhuri, S., Měch, R. PARSENET: A Parametric Surface Fitting Network for 3D Point Clouds. 
\textit{European Conference on Computer Vision (ECCV), Lecture Notes in Computer Science}, pp 261–276, 2020. 

\bibitem[Shi et~al. (2021)]{Shi21} 
M. Han, J. Kan and Y. Wang. Ellipsoid Fitting Using Variable Sample Consensus and Two-Ellipsoid-Bounding-Counting for Locating Lingwu Long Jujubes in a Natural Environment. 
\textit{IEEE Access}, vol. 7, pp. 164374-164385, 2019. 

\bibitem[Shi et~al. (2021)]{Shi21b} 
Shi Y, Zhao G, Wang M, Xu Y, Zhu D. An Algorithm for Fitting Sphere Target of Terrestrial LiDAR.
\textit{Sensors},21(22):7546, 2021. 

\bibitem[Sommer et~al. (2020)]{Sommer20} 
C. Sommer, Y. Sun, E. Bylow, and D.
Cremers. PrimiTect: Fast Continuous Hough Voting for
Primitive Detection.
\textit{ICRA}, 2020. 


\bibitem[Stauffer et~al.(1999)]{Stauffer99} 
C. Stauffer, E. Grimson. Adaptive background mixture models for real-time tracking. 
\textit{Computer Vision and Pattern Recognition}.  

\bibitem[Stout (1974)]{Stout} 
W.F. Stout. Almost sure convergence. 
\textit{A Series of Monographs and Textbooks in Probability and
Mathematical Statistics}.

\bibitem[Sun et~al. (2019)]{Sun2019} 
B. Sun and P. Mordohai. Oriented point sampling
for plane detection in unorganized point clouds.
\textit{ICRA},21(22), 2019. 

\bibitem[Thom (1955)]{Thom} 
A. Thom,. A statistical examination of the megalithic sites in Britain. 
\textit{Journal of the Royal Statistical Society, Series A (General)}, pages 275-295.



\bibitem[Yang et~al. (2020)]{Yang20} 
Yang, B., Liu, C., Li, B., Jiao, J., Ye, Q. Prototype Mixture Models for Few-Shot Semantic Segmentation.
\textit{ECCV}, vol 12353, 2020. 

\bibitem[Zhang et~al.(2007)]{Zhang07} 
X. Zhang, J. Stockel, M. Wolf, P. Cathier, G. McLennan, E.A. Hoffman and M. Sonka.  A New Method for Spherical Object Detection and Its Application to Computer Aided Detection of Pulmonary Nodules in CT Images. 
\textit{Medical Image Computing and Computer-Assisted Intervention}, vol. 4791, pages 842-849.



\bibitem[Zhao et~al. (2021)]{Zhao21} 
Zhao, Mingyang and Jia, Xiaohong and Ma, Lei and Qiu, Xinlin and Jiang, Xin and Yan, Dong‐Ming. Robust Ellipsoid-specific Fitting via Expectation Maximization.
\textit{Conference BMVC}, 2021. 

\bibitem[Zuo et~al. (2019)]{Zuo19} 
J. Zuo, Z. Jia, J. Yang and N. Kasabov Moving Target Detection Based on Improved Gaussian Mixture Background Subtraction in Video Images.
\textit{IEEE Access}, vol. 7, pp. 152612-152623, 2019.  














\end{thebibliography}

\bibliographystyle{apalike}

\end{document}